\declaretheoremstyle[%
  spaceabove=-6pt,%
  spacebelow=6pt,%
  headfont=\normalfont\itshape,%
  postheadspace=1em,%
  qed=\qedsymbol%
]{mystyle} 
\newtheorem{thm}{Theorem}[section]
\newtheorem{prop}{Proposition}[section]
\newtheorem{cor}[thm]{Corollary}
\newtheorem{lemma}[thm]{Lemma}
\newtheorem{example}[thm]{Example}
\newtheorem{construction}[thm]{Construction}
\newcommand{\FF}{{\rm I\!F}}
\newcommand{\CC}{\mathcal{C}}
\newcommand{\ba}{\mathbf{a}}
\newcommand{\wt}{\mathtt{wt}}
\newcommand{\hull}{\mathtt{hull}}
\title{On Construction of  Linear   (Euclidean) Hull Codes over Finite Extensions  Binary Fields}
\author{Sanjit Bhowmick$^{1}$, Deepak Kumar Dalai$^{1}$, Sihem Mesnager$^{2}$
\footnote{
$^{1}$School of Mathematical Sciences,
National Institute of Science Education and Research,\\
An OCC of Homi Bhabha National Institute, Bhubaneswar, Odisha 752050, India. Email: sanjitbhowmick392@gmail.com; deepak@niser.ac.in\\
$^{2}$Department of Mathematics, University of Paris VIII, F-93526 Saint-Denis, Laboratory Analysis, Geometry and Applications, LAGA, University Sorbonne Paris Nord, CNRS, UMR 7539, F-93430, Villetaneuse, France and Telecom Paris, Polytechnic institute of Paris, 91120 Palaiseau, France. Email: smesnager@univ-paris8.fr 
}}
\begin{document}
\maketitle
\begin{abstract} 

The hull of a linear code is defined as the intersection of the code and its dual. This concept was initially introduced to classify finite projective planes. The hull plays a crucial role in determining the complexity of algorithms used to check the permutation equivalence of two linear codes and compute a linear code's automorphism group. Research has shown that these algorithms are very effective when the hull size is small. Linear complementary dual (LCD) codes have the smallest hulls, while codes with a one-dimensional hull have the second smallest. 

A recent notable paper that directs our investigation is authored by H. Chen, titled ``On the Hull-Variation Problem of Equivalent Linear Codes", published in IEEE Transactions on Information Theory, volume 69, issue 5,  in 2023. In this paper, we first explore the one-dimensional hull of a linear code over finite fields. Additionally, we demonstrate that any LCD code over an extended binary field \( \FF_q \) (where \( q > 3 \)) with a minimum distance of at least $2$ is equivalent to the one-dimensional hull of a linear code under a specific weak condition. Furthermore, we provide a construction for creating hulls with \( \ell + 1 \)-dimensionality from an \( \ell \)-dimensional hull of a linear code, again under a weak condition. This corresponds to a particularly challenging direction, as creating \( \ell \)-dimensional hulls from \( \ell + 1 \)-dimensional hulls.  Finally, we derive several constructions for the \( \ell \)-dimensional hulls of linear codes as a consequence of our results. 
\end{abstract}

\noindent\textbf{Keywords:}
Linear code, Hull of a linear code, Dual code, LCD code.\\
 \noindent\textbf{2020 AMS Classification Code:} 94B05; 94B27. \\




\section{Introduction}\label{sec:intr}

For a linear code \(\CC := [n,k]\), the hull of \(\CC\) is defined as \(\hull(\CC) = \CC \cap \CC^\perp\). Assmus and Key first introduced this concept in \cite{AK1990}. Specifically, if \(\dim(\hull(\CC)) = \ell\), we say that \(\CC\) has an \(\ell\)-dimensional hull. In particular, if \(\ell = 0\), \(\CC\) is an LCD code; if \(\ell = k\), \(\CC\) is self-orthogonal, meaning that \(\CC \subseteq \CC^\perp\). In the case where \(\ell = n-k\), it follows that \(\CC^\perp \subseteq \CC\). Therefore, we can conclude that for a linear code \(\CC := [n,k]\) with \(\dim(\hull(\CC)) = \ell\), the dimension \(\ell\) satisfies \(0 \leq \ell \leq \min\{k,n-k\}\).

From the coding theory perspective, the hulls of linear codes play a significant role in determining the complexity of algorithms used to verify the equivalence of permutations of two linear codes. It has been demonstrated that a smaller hull size indicates greater complexity in the algorithms for calculating the group of automorphisms of a linear code \cite{car18, LZ2019}. In particular, codes with low-dimensional hulls have been extensively studied. Many researchers have made significant progress in exploring the conditions necessary and sufficient for a linear code to have a one-dimensional hull (see \cite{LZ2019}) and in designing one-dimensional hull linear codes, such as one-dimensional cyclic codes (see \cite{LZ2019}) or generalizations of Reed-Solomon (RS) codes that are maximum distance separable (MDS).

One case of low hull codes is the $0$-hull codes, also known as Linear Complementary Dual (LCD) codes. It has been demonstrated by Carlet and Guilley (\cite{CG16}) that binary LCD codes are appealing for their cryptographic applications against side-channel and fault injection attacks. Since then, considerable work has been done on LCD codes in recent years to advance this direction, notably for a better understanding of these objects. A main crucial classification result was given by the authors of  \cite{CMT18} on Linear codes over $\FF_{q}$, which are equivalent to LCD codes for $q>3$. Therefore, the $q$-ary Euclidean LCD codes are as good as $q$-ary linear codes for $q > 3$. Further, Carlet, Li, and  Mesnager (\cite{car18}) presented character sums in semi-primitive cases for constructing LCD and linear codes with a one-dimensional hull from cyclotomic fields and multiplicative subgroups of finite fields. For several years, the remaining investigation has focused on binary or ternary cases. Since then, many results have been obtained on the remaining LCD code cases and their classification over finite and other algebraic structures. Notably,  Anderson et al.  (\cite{Anderson-et-al})
presented a simple proof of the equivalence of $\FF_{q}$-linear codes with relative (Euclidean)Hulls and also  Luo et al. presented in \cite{Luo-et-al}  a simple proof of the equivalence of $\FF_{q^2}$-linear codes with diverse Hermitian hull dimensions for $q > 2$ (see Theorem 5), extending the main classification results, presented for trivial hull in \cite{CMT18} (which was focused on LCD codes)  to any dimensional hull codes.

In the study by Chen \cite{Chen23}, an important issue regarding the hull-variation problem of equivalent codes is examined, with implications for coding theory and cryptography. Chen introduced the concept of the maximal dimensional hull as an invariant related to a linear code and equivalent transformations. It was demonstrated that for any integer \(h\) such that \(0 \leq h \leq (n-1)\), a \([2n,n]_q\) self-dual linear code is equivalent to a linear \(h\)-dimensional hull code. In contrast, a linear LCD code over \(\FF_{2^s}\), with minimum distance \(d \geq 2\) and dual distance \(d^\perp \geq 2\), is equivalent to a linear one-dimensional hull code under specific weak conditions. Furthermore, it was shown that for any fixed \(h \leq 0\), there exists an asymptotically good sequence of linear \(h\)-dimensional hull codes over \(\FF_{q}\) for \(q \leq 64\) and \(q \neq 125\), which exceeds the Gilbert-Varshamov bound. Several new families of LCD negacyclic codes over \(\FF_{q}\) and LCD cyclic codes over \(\FF_{3}\) have also been constructed. The method can be applied to build arbitrary-dimensional Hull MDS codes from generalised Reed-Solomon codes and generalised twisted Reed-Solomon codes over \(\FF_{2^s}\).

Recently, Li, Shi, and Kim \cite{LSK23} investigated one-dimensional hulls for binary linear codes. In their paper, they constructed a one-dimensional hull for binary linear codes derived from a binary LCD code. They also produced optimal binary linear codes with one-dimensional hulls. Several other studies (see \cite{Sok20, Sok22, Sok22-bis, Sok22-ter}) focus on the investigation or construction of one-dimensional hulls and higher-dimensional hulls. Much of this work emphasizes deriving \((\ell-1)\)-dimensional hulls from \(\ell\)-dimensional hulls. Additionally, some research aims to explore the opposite direction, obtaining a one-dimensional hull of non-binary codes from LCD codes (i.e., \(0\)-dimensional hulls). 

Currently, to our best knowledge, no articles address the construction of \((\ell+1)\)-dimensional hulls of non-binary codes from \(\ell\)-dimensional hulls of non-binary codes, representing the inverse of studying \((\ell-1)\)-dimensional hulls from \(\ell\)-dimensional hulls. This paper establishes a connection between \((\ell+1)\)-dimensional hulls of non-binary codes and \(\ell\)-dimensional hulls of non-binary codes.

In this paper, we first focus on recent remarkable results by Chen (\cite{Chen23}), who demonstrated, notably,  that a linear complementary dual (LCD) code over extended binary fields, with a minimum distance \(d \geq 2\) and a dual distance \(d^\perp \geq 2\), is equivalent to a linear code with a one-dimensional hull under certain weak conditions. We show that any linear code over an extended binary field, with a minimum distance of at least 2, is equivalent to a linear code with a one-dimensional hull when a weak condition is met. Furthermore, expanding on this result, we demonstrate that any \(\ell\)-dimensional hull of a linear code over the extended binary field, with a minimum distance of at least 2, is equivalent to a linear code with a \((\ell + 1)\)-dimensional hull under a weak condition, for \(0 < \ell < \min\{k, n-k\}\).

We use the term  `` weak condition '' in Chen's paper (\cite{Chen23}) to describe a condition easily verified for a given class of codes. Several classes of linear codes satisfying this condition can be readily identified. In other words, it is not difficult to construct codes that meet the specified conditions.

The paper is organized as follows. In Section~\ref{sec:prel}, we briefly introduce some well-known results regarding the hull of linear codes. In Section~\ref{sec:1dh}, we characterize the properties of LCD codes and show that any LCD code over the finite extended binary field, with a minimum distance of at least 2, is equivalent to a one-dimensional hull of a linear code under a specific condition. Additionally, we present a construction of \((\ell + 1)\)-dimensional hulls of a linear code derived from \(\ell\)-dimensional hulls under a weak condition in Section~\ref{sec:1dhc}. We also provide a simple method for constructing a one-dimensional hull of a linear code from a given LCD code in the same section. Section~\ref{sec:11dhc} introduces various constructions of \(\ell\)-dimensional hulls of linear codes. The paper concludes in Section~\ref{sec: conclusion}.

\section{Some preliminaries}\label{sec:prel}
Let $\FF_q$ be a finite field with cardinality $q = p^m$, where $p$ is a prime number, and $m \geq 2$ is a positive integer. The set of all nonzero elements of $\FF_q$ is denoted as $\FF_q^* = \FF_q \setminus \{0\}$. 

Consider a linear code $\CC:= [n, k, d]$ in $\FF_q^n$ (with the standard notation) with length $n$, dimension $k$ and minimum distance $d$ (the metric employed here is the Hamming distance). For any two vectors $\mathbf{x} = (x_1, x_2, \ldots, x_n)$ and $\mathbf{y} = (y_1, y_2, \ldots, y_n)$ in $\FF_q^n$, the inner product is defined as 
$$\langle \mathbf{x}, \mathbf{y} \rangle = \sum_{i=1}^n x_i y_i.$$ 

The dual code of $\CC$ (denoted as $\CC^\perp$) is defined for a linear code $\CC$ over $\FF_q$ of length $n$ as 
$$\CC^\perp = \{\mathbf{x} \in \FF_q^n \,|\, \langle \mathbf{x}, \mathbf{c} \rangle = 0 \text{ for all } \mathbf{c} \in \CC\}.$$ 

A linear code $\CC$ is considered self-orthogonal if $\CC \subseteq \CC^\perp$, and it is self-dual if $\CC = \CC^\perp$. The hull of a linear code $\CC$ is defined as 
$$\text{hull}(\CC) = \CC \cap \CC^\perp.$$ 
If the dimension of the hull, $\dim(\text{hull}(\CC))$, is denoted as $\ell$, then $\CC$ is referred to as an $\ell$-dimensional hull code. In particular, if $\dim(\text{hull}(\CC)) = 0$, then $\CC$ is called an LCD code.

For any vector $\mathbf{a} = (a_1, a_2, \ldots, a_n) \in \FF_q^n$, we define the linear code $\CC_\mathbf{a}$ as 
$$\CC_\mathbf{a} = \{(a_1 c_1, a_2 c_2, \ldots, a_n c_n) \,|\, (c_1, c_2, \ldots, c_n) \in \CC\}.$$ 
The code $\CC_\mathbf{a}$ is linear if $\CC$ is linear.

A monomial matrix is an invertible matrix with rows of weight one (i.e., exactly one non-zero entry). If all the nonzero entries of a monomial matrix are ones, it is called a permutation matrix. Two codes $\mathcal{C}$ and $\mathcal{C}'$ over $\FF_q$ of the same length are said to be monomial equivalent (or simply equivalent) if there exists a monomial matrix $M$ such that $\mathcal{C}' = \mathcal{C}M = \{cM \,|\, c \in \mathcal{C}\}$.

It is well-known, according to MacWilliams' theorem (see [\cite{MacWilliams}, Theorem 4]), that every isometry on $\mathbb {F}_{q}$ with respect to the Hamming metric is represented by a monomial matrix. Since monomial equivalence preserves the weight distributions, equivalent codes maintain the same basic parameters: length, dimension, and minimum distance. It is straightforward to observe that the duals of equivalent codes are also equivalent. Specifically, if $\mathcal{C}_1$ and $\mathcal{C}_2$ are equivalent such that $\mathcal{C}_2 = \mathcal{C}_1M$, then $\mathcal{C}_{2}^{\perp}$ and $\mathcal{C}_{1}^{\perp}$ are equivalent as well.

Furthermore, if $\ba = (a_1, a_2, \ldots, a_n)$ where $a_i \in \FF_q^*$ for $1 \leq i \leq n$, then the code $\CC$ is equivalent to the code $\CC_\ba$, and the dual of $\CC_\ba$ is given by $(\CC_\ba)^\perp = \ba^{-1} \CC^\perp$, where $\ba^{-1} = (a_1^{-1}, a_2^{-1}, \ldots, a_n^{-1})$. For any permutation $\sigma$ of the set $\{1, 2, \ldots, n\}$, we define the linear code $\sigma(\CC)$ as follows:
\[
\sigma(\CC) = \{(c_{\sigma(1)}, c_{\sigma(2)}, \ldots, c_{\sigma(n)}) : (c_1, c_2, \ldots, c_n) \in \CC\}.
\]

If \( G \) is a generator matrix of \( \CC \), then the generator matrix of \( \CC_\ba \) is of the form 
\[
G_\ba  = \left(\begin{array}{ccccccc}
a_1 g_1 & a_2 g_2 & a_3 g_3 & \cdots & a_n g_n
\end{array}\right),
\]
where \( g_i \) are the columns of \( G \).

For a linear code \(\CC\) with a generator matrix \(G\) and a parity check matrix \(H\), the relationships between the ranks of the matrices \(GG^\top\) and \(HH^\top\) and the hull of \(\CC\) can be described as follows.

\begin{prop}\cite{Gue2018}\label{pr-2.1}
Let \(\CC := [n, k, d]\) be a linear code with generator matrix \(G\) and parity check matrix \(H\). The ranks of the matrices \(GG^\top\) and \(HH^\top\) are independent of the specific choices of \(G\) and \(H\), respectively. We have:
\begin{eqnarray*}
\text{rank}(GG^\top) & = & k - \dim(\hull(\CC)) = k - \dim(\hull(\CC^\perp)), \\
\text{rank}(HH^\top) & = & n - k - \dim(\hull(\CC)) = n - k - \dim(\hull(\CC^\perp)).
\end{eqnarray*}
\end{prop}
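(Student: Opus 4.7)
The plan is to reduce everything to a single rank–nullity computation, once we notice the trivial identity $\hull(\CC)=\CC\cap\CC^\perp=\hull(\CC^\perp)$, which already establishes the second equality in each of the two displayed formulas and reduces the work to proving the first equalities.

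First I would fix a $k\times n$ generator matrix $G$ of $\CC$ and consider the $\FF_q$-linear endomorphism $\phi:\FF_q^k\to\FF_q^k$ given by $\phi(x)=xGG^\top$ (viewing $x$ as a row). The key observation is a reformulation of the kernel: for every $x\in\FF_q^k$ we certainly have $xG\in\CC$, and $(xG)G^\top=0$ means exactly that $xG$ is orthogonal to every row of $G$, hence to all of $\CC$; therefore $(xG)G^\top=0$ iff $xG\in\CC\cap\CC^\perp=\hull(\CC)$. In other words,
\[
\ker(\phi)=\{\,x\in\FF_q^k : xG\in\hull(\CC)\,\}.
\]

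Next I would use that $G$ has full row rank $k$, so the map $x\mapsto xG$ is an $\FF_q$-linear isomorphism from $\FF_q^k$ onto $\CC$; under this isomorphism $\ker(\phi)$ is identified with $\hull(\CC)$, giving $\dim\ker(\phi)=\dim\hull(\CC)$. The rank–nullity theorem then yields
\[
\text{rank}(GG^\top)=k-\dim\hull(\CC),
\]
as required. Applying precisely the same argument to the code $\CC^\perp$ (of dimension $n-k$, with generator matrix $H$, whose dual is $\CC$) yields
\[
\text{rank}(HH^\top)=(n-k)-\dim\hull(\CC^\perp)=(n-k)-\dim\hull(\CC),
\]
where in the last step I invoke the tautology above. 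The asserted independence of the ranks from the specific choices of $G$ and $H$ follows at once, since the right-hand sides of both identities depend only on the invariants $k$, $n$, and $\dim\hull(\CC)$ of the code $\CC$.

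I do not anticipate any serious obstacle: the argument is essentially a bookkeeping exercise with row/column conventions. The only point requiring a little care is keeping straight that $(xG)G^\top=0$ encodes orthogonality of $xG$ against the row space of $G$ (not the column space), so that the kernel of $GG^\top$ really captures membership in $\hull(\CC)$ rather than some incidental subspace; once this is set up correctly, rank–nullity and the injectivity of $x\mapsto xG$ finish everything in two lines.
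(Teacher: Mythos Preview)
Your argument is correct: the identification of $\ker(x\mapsto xGG^\top)$ with the preimage of $\hull(\CC)$ under the isomorphism $x\mapsto xG$, followed by rank--nullity, cleanly gives $\text{rank}(GG^\top)=k-\dim\hull(\CC)$, and the dual computation and the tautology $\hull(\CC)=\hull(\CC^\perp)$ finish the rest.

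Note, however, that the paper does not supply its own proof of this proposition; it is quoted as a known result from \cite{Gue2018}. So there is no ``paper's proof'' to compare against. Your self-contained derivation is a standard and perfectly acceptable way to establish it.
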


The following proposition completely characterises an LCD (Linear Complementary Dual) code.

\begin{prop}\cite{Mas92}\label{pr-2.2}
Let $\CC:=[n,k]$ be a linear code over $\FF_q$ with generator matrix $G$ and parity check matrix $H$. Then the following statements are equivalent
\begin{enumerate}
\item $\CC$ is LCD;
\item $k\times k$ matrix $GG^\top$ is invertible;
\item $(n-k)\times (n-k)$ matrix $HH^\top$ is invertible. 
\end{enumerate}
\end{prop}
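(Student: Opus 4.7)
The plan is to derive all three equivalences as immediate consequences of Proposition~\ref{pr-2.1}, while recording the short direct calculation that identifies $\hull(\CC)$ with the kernel of $GG^\top$, so the argument stays self-contained.

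First I would handle $(1) \Leftrightarrow (2)$. Recall that $\CC$ is LCD precisely when $\hull(\CC) = \CC \cap \CC^\perp = \{\mathbf{0}\}$. Since $G$ has full row rank, every element of $\CC$ has a unique representation $\mathbf{y}G$ with $\mathbf{y} \in \FF_q^k$, and $\mathbf{y}G$ lies in $\CC^\perp$ iff $\langle \mathbf{y}G, \mathbf{z}G\rangle = \mathbf{y}(GG^\top)\mathbf{z}^\top = 0$ for every $\mathbf{z} \in \FF_q^k$, which is equivalent to $\mathbf{y}(GG^\top) = \mathbf{0}$. So the linear map $\mathbf{y} \mapsto \mathbf{y}G$ is an isomorphism from the left null space of $GG^\top$ onto $\hull(\CC)$; consequently $\dim\hull(\CC) = k - \mathrm{rank}(GG^\top)$ (recovering the first identity in Proposition~\ref{pr-2.1}), and therefore $\CC$ is LCD iff $GG^\top$ has trivial left null space iff the $k\times k$ matrix $GG^\top$ is invertible.

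For $(1) \Leftrightarrow (3)$, I would invoke the duality identity $\hull(\CC) = \hull(\CC^\perp)$, which is immediate from the symmetry of the definition together with $(\CC^\perp)^\perp = \CC$. Hence $\CC$ is LCD iff $\CC^\perp$ is LCD. Now $H$ is a generator matrix of $\CC^\perp$, so the argument in the previous paragraph, applied to $\CC^\perp$ in place of $\CC$, yields $\CC^\perp$ is LCD iff the $(n-k)\times(n-k)$ matrix $HH^\top$ is invertible. Chaining the two equivalences gives $(1) \Leftrightarrow (3)$.

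The argument is entirely routine given Proposition~\ref{pr-2.1}; the only step requiring a moment of care is the bijection between $\hull(\CC)$ and $\ker(GG^\top)$, which rests on two facts used simultaneously: that $\mathbf{y}G$ automatically lies in $\CC$, and that membership in $\CC^\perp$ is testable against the rows of $G$ alone (since they span $\CC$). I do not anticipate any substantive obstacle.
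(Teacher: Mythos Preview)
Your argument is correct. The paper does not actually prove Proposition~\ref{pr-2.2}; it is quoted as a known result from Massey~\cite{Mas92} and stated without proof in the preliminaries. Your proof is the standard one: identify $\hull(\CC)$ with the left null space of $GG^\top$ via $\mathbf{y}\mapsto \mathbf{y}G$ to get $(1)\Leftrightarrow(2)$, then use $\hull(\CC)=\hull(\CC^\perp)$ and apply the same reasoning to $H$ (as a generator of $\CC^\perp$) for $(1)\Leftrightarrow(3)$. Nothing is missing; indeed, what you wrote also re-derives the first identity of Proposition~\ref{pr-2.1} along the way, so invoking that proposition separately is not even necessary.
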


The following proposition gives another characterization of an LCD code.
\begin{prop}\cite{Mas92}\label{pr-2.3}
Let $\CC:=[n,k]$ be a linear code over $\FF_q$ with generator matrix $G$ and parity check matrix $H$. Then $\CC$ is an LCD if and only if the matrix 
$\left(\begin{array}{cc} G \\ H \end{array}\right)$ is invertible.  
\end{prop}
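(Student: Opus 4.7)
The statement is the classical Massey-style characterization of LCD codes via invertibility of the combined generator/parity-check matrix $M := \binom{G}{H}$, which is a square $n \times n$ matrix. The plan is to prove both implications by a direct linear algebra argument on $M$, and to note an alternative proof route via Proposition~\ref{pr-2.2}.

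For the forward direction, I would assume $M$ is invertible and show $\hull(\CC) = \CC \cap \CC^\perp = \{0\}$. Given any $v \in \CC \cap \CC^\perp$, express $v = \alpha G$ with $\alpha \in \FF_q^k$ (because $v \in \CC$) and $v = \beta H$ with $\beta \in \FF_q^{n-k}$ (because $v \in \CC^\perp$). Then $\alpha G - \beta H = 0$, i.e., the row vector $(\alpha, -\beta) \in \FF_q^n$ annihilates $M$ from the left. Since $M$ is invertible, $(\alpha,-\beta) = 0$, hence $v = \alpha G = 0$. So $\CC$ is LCD.

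For the converse, assume $\CC$ is LCD. Suppose $(\alpha,\beta) \in \FF_q^n$ satisfies $(\alpha,\beta) M = 0$, i.e., $\alpha G = -\beta H$. The left side is a codeword of $\CC$ and the right side is a codeword of $\CC^\perp$, so this common vector lies in $\CC \cap \CC^\perp = \{0\}$. Thus $\alpha G = 0$ and $\beta H = 0$; since $G$ and $H$ both have full row rank (as generator and parity check matrices respectively), this forces $\alpha = 0$ and $\beta = 0$. Hence the $n$ rows of $M$ are linearly independent, and since $M$ is $n \times n$, it is invertible.

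No step here is an obstacle; the only point to be a little careful about is the initial setup, namely that $v \in \CC^\perp$ really does guarantee $v = \beta H$ for some $\beta$ (this uses that the rows of $H$ span $\CC^\perp$, which is the definition of a parity check matrix). As a cleaner alternative I could observe that $GH^\top = 0$ (because rows of $H$ lie in $\CC^\perp$), whence
\[
M M^\top = \begin{pmatrix} GG^\top & 0 \\ 0 & HH^\top \end{pmatrix},
\]
so $\det(MM^\top) = \det(GG^\top)\det(HH^\top)$. Since $M$ is square, $M$ is invertible iff $MM^\top$ is, iff both $GG^\top$ and $HH^\top$ are invertible; the result then follows immediately from Proposition~\ref{pr-2.2}. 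I would likely present the first, more direct argument as the main proof.
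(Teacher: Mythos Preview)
Your proof is correct. Note, however, that the paper does not actually prove Proposition~\ref{pr-2.3}: it is simply quoted from \cite{Mas92}. The paper does prove the more general Proposition~\ref{p-11} later on (that $\dim(\CC\cap\CC^\perp)=\ell$ iff $\operatorname{rank}\binom{G}{H}=n-\ell$), of which Proposition~\ref{pr-2.3} is the case $\ell=0$, so that is the natural comparison point.

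The paper's argument for Proposition~\ref{p-11} works with the \emph{right} null space of $M=\binom{G}{H}$: since $x\in\CC\iff Hx^\top=0$ and $x\in\CC^\perp\iff Gx^\top=0$, one has $x\in\CC\cap\CC^\perp\iff Mx^\top=0$, so the right kernel of $M$ is literally $\CC\cap\CC^\perp$ and the rank-nullity theorem finishes immediately. Your main argument instead works with the \emph{left} null space of $M$, translating $(\alpha,\beta)M=0$ into $\alpha G=-\beta H\in\CC\cap\CC^\perp$ and then invoking full row rank of $G$ and $H$. Both are perfectly valid; the right-kernel version is marginally slicker because it avoids the extra appeal to full row rank and generalizes transparently to arbitrary $\ell$. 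Your alternative route via $MM^\top=\operatorname{diag}(GG^\top,HH^\top)$ and Proposition~\ref{pr-2.2} is also correct and is a nice way to tie the two characterizations together.
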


\section{A General Construction of One-Dimensional Hull Codes over $\FF_{2^t}$}\label{sec:1dh}

This section presents a method for constructing linear codes with a one-dimensional hull. We aim to provide a result that helps control the dimension of the hull of a given linear code. 

Let $\CC := [n, k, d]$ be a linear code over $\FF_q$ with $q > 3$. According to \cite[Corollary 14]{CMT18}, the code $\CC$ is equivalent to an LCD (linear complementary dual) code. 

Assuming $\CC:= [n, k, d]$ is an LCD code with a generator matrix of the form: 
\[
G = \left(\begin{array}{cc} I_k & P \end{array}\right),
\]
we can state that if $\CC := [n, k, d]$ is a linear code with a minimum distance of $d \geq 2$, then there exists \( a \in \FF_q^* \) such that a generator matrix for $\CC$ (or an equivalent code to $\CC$) can be expressed as:
\begin{equation}\label{eq-3.1}
G = \left(\begin{array}{cccc}
1 & 0 & P_1 & a \\
0 & I_{k-1} & P_2 & b
\end{array}\right),
\end{equation}
where \( P_1 \) is a \( 1 \times (n - k - 1) \) matrix, \( P_2 \) is a \( (k - 1) \times (n - k - 1) \) matrix, and \( b \) is a \( (k - 1) \times 1 \) matrix.

From this point onward, we will consider the above form of the generator matrix. Let $\CC := [n, k, d]$ be a linear code over $\FF_q$ with \( q > 3 \), and assume it has a generator matrix \( G \) in the form given in \eqref{eq-3.1}. We will first demonstrate the existence of an equivalent code \( \tilde{\CC} \) with a certain generator matrix form.

\begin{lemma}\label{lm-3.1}
Let $\CC:=[n,k,d]$ be a linear code over $\FF_q$, where $q > 3$, with a minimum distance $d \geq 2$ and a generator matrix $G$ in the form given by equation~\eqref{eq-3.1}. If \( P_1 P_1^\top + a^2 = 0 \), then there exists an equivalent code \( \tilde{\CC} \) of \( \CC \) with a generator matrix of the form 
\[
\tilde{G} = \begin{pmatrix}
1 & 0 & P_1 & a_1 \\
0 & I_{k-1} & P_2 & b_1
\end{pmatrix}
\]
where \( P_1 P_1^\top + a_1^2 \neq 0 \).
\end{lemma}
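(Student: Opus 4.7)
The plan is to build $\tilde{\CC}$ as a column-scaled copy of $\CC$, changing only the very last coordinate. Specifically, I would take $\ba = (1,1,\ldots,1,\lambda)$ for some $\lambda \in \FF_q^*$ to be chosen, and set $\tilde{\CC} := \CC_\ba$. By the column-by-column description of the generator matrix of $\CC_\ba$ given just before the lemma, starting from $G$ in the form \eqref{eq-3.1} yields
\[
\tilde{G} = \begin{pmatrix} 1 & 0 & P_1 & \lambda a \\ 0 & I_{k-1} & P_2 & \lambda b \end{pmatrix},
\]
which is again in the form prescribed by the lemma with $a_1 := \lambda a$ and $b_1 := \lambda b$. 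The blocks $P_1$ and $P_2$ are untouched, and the $I_{k-1}$ block is preserved because the first $n-1$ column scalars are equal to $1$.

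Next I would compute $P_1 P_1^\top + a_1^2$ and choose $\lambda$ so it is nonzero. Using the hypothesis $P_1 P_1^\top + a^2 = 0$ one gets $P_1 P_1^\top = -a^2 = a^2$, where the second equality exploits that $q = 2^t$ forces characteristic $2$. Substituting gives
\[
P_1 P_1^\top + a_1^2 \;=\; a^2 + \lambda^2 a^2 \;=\; a^2(1+\lambda^2) \;=\; a^2(1+\lambda)^2,
\]
where the final step uses the Frobenius identity $1+\lambda^2 = (1+\lambda)^2$ valid in characteristic $2$.

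To conclude, note that $a \in \FF_q^*$ by the normal form \eqref{eq-3.1}, so $a^2 \neq 0$, and therefore $a^2(1+\lambda)^2 \neq 0$ if and only if $\lambda \neq 1$. The hypothesis $q > 3$ gives $|\FF_q^* \setminus \{1\}| = q - 2 \geq 2$, so such a $\lambda$ exists and any choice of it completes the construction. The only genuinely substantive point in this argument is the characteristic-$2$ collapse $a^2 + \lambda^2 a^2 = a^2 (1+\lambda)^2$, which reduces what looks like a quadratic obstruction in $\lambda$ to the single linear condition $\lambda \neq 1$; beyond that the proof is bookkeeping on how column scaling interacts with the block form \eqref{eq-3.1}.
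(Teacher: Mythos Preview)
Your proof is correct and follows essentially the same approach as the paper: scale only the last coordinate by some $\lambda\in\FF_q^*$, obtain the generator matrix $\tilde G$ with $a_1=\lambda a$, and then pick $\lambda$ so that $P_1P_1^\top+a_1^2=a^2(\lambda^2-1)\neq 0$. The paper states the requirement as $\mu^2\neq 1$ (valid for any $q>3$), while you specialize to characteristic~$2$ and rewrite it as $\lambda\neq 1$ via $(1+\lambda)^2=1+\lambda^2$; over $\FF_{2^t}$, the setting of this section, the two conditions coincide.
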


\begin{proof}
Consider a non-zero element \( \mu \in \FF_q^* \) and define the code \( \tilde{\CC} = \{(c_1, c_2, \ldots, \mu c_n) \mid (c_1, c_2, \ldots, c_n) \in \CC\} \). This means \( \tilde{\CC} = \CC_\ba \) where \( \ba = (1, 1, \ldots, 1, \mu) \in (\FF_q^*)^n \). It can be observed that \( \tilde{\CC} \) is equivalent to \( \CC \), and a generator matrix \( \tilde{G} \) is of the form 
\[
\tilde{G} = \begin{pmatrix}
1 & 0 & P_1 & \mu a \\
0 & I_{k-1} & P_2 & \mu b
\end{pmatrix}.
\]
Next, we will prove that there exists an element \( \mu \in \FF_q^* \) such that \( P_1 P_1^\top + \mu^2 a^2 \neq 0 \).
 To ensure this condition holds, we choose  \( \mu \in \FF_q^* \) such that $\mu^2\neq 1$.

This concludes the proof.
\end{proof}
\begin{lemma}\label{lm-3.2}
Let \( q > 3 \) be an even prime power. If \(\mathcal{C} := [n, k, d]\) is an LCD code over \(\FF_{q}\) with minimum distance \( d \geq 2 \) and a generator matrix \( G \) of the form given in \eqref{eq-3.1}, along with the condition \( P_1 P_1^\top + a^2 = 0 \), then \(\tilde{\mathcal{C}}\) (defined in Lemma~\ref{lm-3.1}) is also an LCD code.
\end{lemma}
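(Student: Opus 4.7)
The plan is to exploit the characterization of Proposition~\ref{pr-2.2}: $\tilde{\CC}$ is LCD if and only if $\tilde{G}\tilde{G}^\top$ is invertible, and I would use the known invertibility of $GG^\top$ to deduce invertibility of $\tilde{G}\tilde{G}^\top$. First, I would observe that the construction in Lemma~\ref{lm-3.1} gives $\tilde{G} = GD$, where $D = \mathrm{diag}(1,\ldots,1,\mu)$. Since $D$ is diagonal, $D^\top = D$, so
\[
\tilde{G}\tilde{G}^\top \;=\; G D^2 G^\top \;=\; GG^\top \,+\, (\mu^2-1)\,v v^\top,
\]
where $v = G e_n = (a,\,b^\top)^\top$ is the last column of $G$ read off from the block form \eqref{eq-3.1}. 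In characteristic 2 this rewrites as $\tilde{G}\tilde{G}^\top = GG^\top + (\mu+1)^2 v v^\top$, which is a rank-one perturbation of $GG^\top$.

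Next I would apply the matrix determinant lemma (Sylvester's identity). Since $\CC$ is LCD, Proposition~\ref{pr-2.2} tells us $GG^\top$ is invertible, so
\[
\det(\tilde{G}\tilde{G}^\top) \;=\; \det(GG^\top)\,\bigl(1 \,+\, (\mu+1)^2\,\alpha\bigr),
\qquad \text{where } \alpha := v^\top(GG^\top)^{-1}v \in \FF_q.
\]
Thus $\tilde{\CC}$ is LCD exactly when $(\mu+1)^2\alpha \neq 1$. This reduces the problem to choosing $\mu$ so that this scalar inequality holds in $\FF_q$, in addition to the constraints $\mu \in \FF_q^*$ and $\mu^2 \neq 1$ inherited from Lemma~\ref{lm-3.1}.

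The final step is a counting argument, which I expect to be the only genuinely delicate point. If $\alpha = 0$, then $1 + (\mu+1)^2 \alpha = 1 \neq 0$ and any $\mu \in \FF_q^*$ with $\mu \neq 1$ works. If $\alpha \neq 0$, then since $\FF_q$ has characteristic $2$ the Frobenius map $x \mapsto x^2$ is a bijection, so the equation $(\mu+1)^2 = \alpha^{-1}$ has the unique solution $\mu = \sqrt{\alpha^{-1}} + 1$. Consequently the forbidden set for $\mu$ inside $\FF_q$ is $\{0,\,1,\,\sqrt{\alpha^{-1}}+1\}$, of size at most $3$. Because $q > 3$ is an even prime power, $q \geq 4$, so there are at least $q - 3 \geq 1$ admissible values of $\mu$.

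For any such $\mu$, both the conclusion of Lemma~\ref{lm-3.1} (that $P_1 P_1^\top + \mu^2 a^2 \neq 0$) and $\det(\tilde{G}\tilde{G}^\top) \neq 0$ hold simultaneously, and Proposition~\ref{pr-2.2} then gives that $\tilde{\CC}$ is LCD, completing the proof. The main obstacle is ensuring the forbidden set in the last paragraph is strictly smaller than $\FF_q$; the characteristic-2 hypothesis is essential here, since it collapses the equation $(\mu+1)^2 = \alpha^{-1}$ to a single root rather than two, and the bound $q \geq 4$ just barely guarantees a surviving choice.
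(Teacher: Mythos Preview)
Your proof is correct, and the overall strategy matches the paper's: use Proposition~\ref{pr-2.2}, observe that scaling a single coordinate perturbs the Gram matrix in a controlled way, and count the finitely many bad values of $\mu$ to see that $q\ge 4$ leaves room. The execution differs in two respects worth noting. First, the paper works on the parity-check side, computing $\tilde H\tilde H^\top$ and observing that it differs from $HH^\top$ only in the single entry $a^2+b^\top b+\mu^{-2}$; it then asserts that $\det(\tilde H\tilde H^\top)=0$ is an equation in $\mu$ with at most two roots and invokes $q>3$. You instead work on the generator side and make the rank-one structure explicit via $\tilde G\tilde G^\top = GG^\top + (\mu+1)^2 vv^\top$, then apply the matrix determinant lemma to get the closed form $\det(GG^\top)\bigl(1+(\mu+1)^2\alpha\bigr)$. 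Second, your argument is sharper: you pin down the forbidden set as $\{0,1,\sqrt{\alpha^{-1}}+1\}$ (using that Frobenius is bijective in characteristic~$2$), whereas the paper's ``at most two roots'' is a looser bound that still happens to suffice. Both routes are equally valid by the equivalence in Proposition~\ref{pr-2.2}; yours is more self-contained and makes the role of the hypothesis $q>3$ completely transparent.
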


\begin{proof}
Since \(\mathcal{C}\) is an LCD code, it follows from Proposition~\ref{pr-2.2} that \( HH^\top \) is invertible over \(\FF_{q}\). A parity check matrix \( H \) can be represented as follows:
\[
H = \begin{pmatrix}
-P_1^\top & -P_2^\top & I_{n-k-1} & 0 \\
-a & -b^\top & 0 & 1
\end{pmatrix}.
\]

Then, we can compute \( HH^\top \):
$$HH^\top=\left(\begin{array}{cc}
I_{n-k-1}+P_1^\top P_1 +P^\top_2P_2  & P_1^\top a+P^\top_2b  \\
aP_1+ b^\top P_2  &  a^2+b^\top b+1 
\end{array}\right).$$ 

By Lemma~\ref{lm-3.1}, there exists an element  $\mu \in \FF_q^*$ with $\mu^2\neq 1$ such that a generator matrix and a parity check matrix of $\tilde{\CC}$ are, respectively, of the form  $\tilde{G}$ 	and $\tilde{H}$ as below.
$$\tilde{G}=\left(\begin{array}{cccc}
1 & 0 & P_1 & \mu a \\
0 & I_{k-1} & P_2 & \mu b
\end{array}\right),~~~~~\tilde{H}=\left(\begin{array}{cccc}
-P_1^\top & -P_2^\top & I_{n-k-1} & 0 \\
-a & -b^\top & 0 & \mu^{-1}
\end{array}\right), \text{ respectively}.$$

Now consider 
$$\tilde{H}\tilde{H}^\top=\left(\begin{array}{cc}
I_{n-k-1}+P_1^\top P_1 +P^\top_2P_2  & P_1^\top a+P^\top_2b  \\
aP_1+ b^\top P_2  &  a^2+b^\top b+\mu^{-2} 
\end{array}\right).$$
 
 If $\tilde{H}\tilde{H}^\top$ is an invertible, the result follows immediately. Otherwise,
suppose \( \det(\tilde{H}\tilde{H}^\top) = 0 \).  Note that the equation for $\mu$ can have at most $2$ roots. But since q>3, thus, the proof is complete.
\end{proof}

\begin{thm}\label{th-3.1}
Let \( q > 3 \) be an even prime power. Let \(\mathcal{C}:=[n,k,d]\) be an LCD code over \(\FF_{q}\) with minimum distance \( d \geq 2 \), and let \( G \) be a generator matrix of the form 
\[
G = \left(\begin{array}{cccc}
1 & 0 & P_1 & a \\
0 & I_{k-1} & P_2 & b
\end{array}\right)\]
where \( P_1 P_1^\top + a^2 \neq 0 \). If \( P_1 P_2^\top + ab^\top = 0 \), then there exists \( \mathbf{a} \in (\FF_q^*)^n \) such that 
$\dim(\mathcal{C}_{\mathbf{a}} \cap \mathcal{C}_{\mathbf{a}}^\perp) = 1$.
\end{thm}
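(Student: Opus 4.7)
The plan is to take $\ba$ of the simple form $\ba = (\lambda, 1, 1, \ldots, 1) \in (\FF_q^*)^n$, that is, to rescale only the first coordinate, and to choose the scalar $\lambda \in \FF_q^*$ so that $\mathrm{rank}(G_{\ba} G_{\ba}^\top) = k-1$. Proposition~\ref{pr-2.1} will then give $\dim(\hull(\CC_{\ba})) = k - (k-1) = 1$, as required.

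The first step is to exploit the hypothesis $P_1 P_2^\top + ab^\top = 0$ to compute $GG^\top$. A direct row-by-row inner-product computation yields
\[
GG^\top \;=\; \begin{pmatrix} 1 + P_1 P_1^\top + a^2 & \mathbf{0} \\ \mathbf{0}^\top & I_{k-1} + P_2 P_2^\top + b b^\top \end{pmatrix},
\]
the off-diagonal block being exactly $P_1 P_2^\top + ab^\top = 0$. Since $\CC$ is LCD, Proposition~\ref{pr-2.2} tells us that $GG^\top$ is invertible, and hence both diagonal blocks are invertible; in particular the lower-right $(k-1)\times(k-1)$ block $M := I_{k-1} + P_2 P_2^\top + b b^\top$ is invertible.

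Rescaling only the first column of $G$ by $\lambda$ changes neither the off-diagonal block (still zero, as its entries do not involve the first column of $G$) nor the lower-right block $M$; only the top-left scalar entry of $G_{\ba} G_{\ba}^\top$ is affected, becoming $\lambda^2 + P_1 P_1^\top + a^2$. To force the rank to drop from $k$ to $k-1$, I want this scalar to vanish while keeping $M$ intact. Because $q$ is a power of $2$, the Frobenius map $x \mapsto x^2$ is a bijection on $\FF_q$, and I may let $\lambda \in \FF_q$ be the unique square root of $P_1 P_1^\top + a^2$; in characteristic $2$ this gives $\lambda^2 + P_1 P_1^\top + a^2 = 0$. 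The standing hypothesis $P_1 P_1^\top + a^2 \neq 0$ forces $\lambda \neq 0$, so $\ba = (\lambda, 1, \ldots, 1) \in (\FF_q^*)^n$ is legitimate and, by construction, $\mathrm{rank}(G_{\ba} G_{\ba}^\top) = \mathrm{rank}(M) = k-1$.

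The sole nontrivial ingredient is the existence of the square root in $\FF_q$, which is why the statement is restricted to extended binary fields: in odd characteristic $-(P_1 P_1^\top + a^2)$ might fail to be a square, and a single-coordinate rescaling would no longer suffice — one would need to perturb several coordinates at once. Beyond this, the argument reduces to routine bookkeeping on a $2 \times 2$ block-diagonal decomposition.
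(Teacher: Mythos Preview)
Your proof is correct and uses exactly the same construction as the paper: rescale only the first coordinate by the unique $\lambda\in\FF_q^*$ with $\lambda^2 = P_1P_1^\top + a^2$, obtaining the same block-diagonal $G_\ba G_\ba^\top$. Where you diverge is in pinning the hull dimension at \emph{exactly} one. The paper first proves the upper bound $\dim(\CC_\lambda\cap\CC_\lambda^\perp)\le 1$ by a separate argument: it introduces the auxiliary subcode $\CC_1$ generated by the last $k-1$ rows of $G$, uses the LCD hypothesis via Proposition~\ref{pr-2.3} to deduce rank conditions on submatrices of $\left(\begin{smallmatrix}G\\H\end{smallmatrix}\right)$, shows $\CC_1\cap\CC_\lambda^\perp=\{0\}$, and then runs a dimension count on $\CC_1+\CC_\lambda^\perp\subseteq\CC_\lambda+\CC_\lambda^\perp$; only afterwards does it show $\det(G_\lambda G_\lambda^\top)=0$ for the lower bound. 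You bypass this entire detour by noting that the lower block $M=I_{k-1}+P_2P_2^\top+bb^\top$ is already invertible --- a fact that drops out of the LCD hypothesis once $GG^\top$ is seen to be block-diagonal --- so $\mathrm{rank}(G_\ba G_\ba^\top)=k-1$ on the nose and Proposition~\ref{pr-2.1} finishes immediately. Your argument is shorter and more transparent; the paper's machinery with $\CC_1$ and the stacked parity-check matrix is not needed once the block structure of $GG^\top$ has been written down.
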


\begin{proof}
Consider a parity-check matrix \( H \) of \(\mathcal{C}\) in the form
\[
H = \begin{pmatrix}
-P_1^\top & -P_2^\top & I_{n-k-1} & 0 \\
-a & -b^\top & 0 & 1
\end{pmatrix}. 
\]
Since \(\mathcal{C}\) is an LCD code, by the well-known characterization of LCD codes given by Proposition~\ref{pr-2.3}, we have:

\begin{equation}\label{eq-3.3}
\textit{rank}\left(\begin{array}{cccc}
   G\\ H
\end{array}\right) = \textit{rank}\left(\begin{array}{cccc}
1 & 0 & P_1 & a \\
0 & I_{k-1} & P_2 & b\\
-P_1^\top & -P_2^\top & I_{n-k-1} & 0 \\
-a & -b^\top & 0 & 1
\end{array}\right) = n. 
\end{equation}

Therefore,
\begin{equation}\label{eq-3.6}
\textit{rank}\left(\begin{array}{cccc}
0 & I_{k-1} & P_2 & b\\
-P_1^\top & -P_2^\top & I_{n-k-1} & 0 \\
-a & -b^\top & 0 & 1
\end{array}\right) = n-1 \text{ and } \textit{rank}\left(\begin{array}{cccc}
1 & 0 & P_1 & a \\
0 & I_{k-1} & P_2 & b\\
-P_1^\top & -P_2^\top & I_{n-k-1} & 0 \\
\end{array}\right) = n-1.
\end{equation}

Establishing the existence of a one-dimensional hull of a linear code equivalent to $\CC$ involves demonstrating that there is an element $\ba \in (\FF_p^*)^n$ such that $\CC_{\ba}$ is equivalent to $\CC$ and $\dim(\CC_{\ba} \cap \CC_{\ba}^\perp) = 1.$

Denote 
$$\CC_{\lambda}=\{(\lambda c_1,c_2,\ldots,c_n)~|~(c_1,c_2,\ldots,c_n)\in \CC\}.$$ 
Then, let consider a generator matrix $G_\lambda$ and a parity check matrix $H_\lambda$ of $\CC_\lambda$ as
$$G_\lambda = \left(\begin{array}{cccc}
\lambda & 0 & P_1 & a \\
0 & I_{k-1} & P_2 & b
\end{array}\right) \text{ and } H_\lambda = \left(\begin{array}{cccc}
-\lambda^{-1}P_1^\top & -P_2^\top & I_{n-k-1} & 0 \\
-\lambda^{-1}a & -b^\top & 0 & 1
\end{array}\right), \text{ respectively}.$$
Note that since  $\CC_1 \subseteq \CC_\lambda$, $\CC_1 +\ \CC^\perp_\lambda \subseteq \CC_\lambda +\CC^\perp_\lambda$,  the linear code   $\CC_1$ is generated by 
$G_1 = \left(\begin{array}{cccc} 0 & I_{k-1} & P_2 & b \end{array}\right)$.
 
Moreover, $\CC_1 \cap \CC^\perp_\lambda = \{0\}$ as 
$\textit{rank}\left(\begin{array}{cccc}
0 & I_{k-1} & P_2 & b\\
-\lambda^{-1}P_1^\top & -P_2^\top & I_{n-k-1} & 0 \\
-\lambda^{-1}a & -b^\top & 0 & 1
\end{array}\right) = n-1$ (see Equation \eqref{eq-3.3}).

Since $\CC_1 + \CC^\perp_\lambda \subseteq \CC_\lambda + \CC^\perp_\lambda$, 
$\dim\left(\CC_1 + \CC^\perp_\lambda\right) \leq \dim\left(\CC_\lambda + \CC^\perp_\lambda\right) \implies
\dim(\CC_1) + \dim(\CC^\perp_\lambda) - \dim(\CC_1 \cap \CC^\perp_\lambda)
\leq \dim(\CC_\lambda) + \dim(\CC^\perp_\lambda) - \dim(\CC_\lambda \cap \CC^\perp_\lambda) \implies \dim(\CC_\lambda \cap \CC^\perp_\lambda) \leq 1$.

We will demonstrate that there exists a nonzero codeword in the intersection of $\CC_\lambda$ and $\CC^\perp_\lambda$ for an appropriate choice of $\lambda \in \FF_q^*$. To establish this, it suffices to prove that there exists a suitable  $\lambda \in \FF_q^*$, $\det(G_\lambda G_\lambda^\top) = 0$. 

Now, $$G_\lambda G_\lambda^\top = \left(\begin{array}{cccc}
\lambda^2+P_1P_1^\top+a^2 & P_1P_2^\top+ab^\top\\
P_2P_1^\top+ba & I_{k-1}+P_2P_2^\top+bb^\top
\end{array}\right) = \left(\begin{array}{cccc}
\lambda^2+P_1P_1^\top+a^2 & 0 \\
0 & I_{k-1}+P_2P_2^\top+bb^\top
\end{array}\right).$$
Since $q$ is even and $P_1P_1^\top + a^2 \neq 0$, there always exists $\lambda \in \FF_q^*$ such that $\lambda^2 + P_1P_1^\top + a^2 = 0$.
That implies, for such $\lambda$, $\det(G_\lambda G_\lambda^\top)=0$, which completes  the proof.
\end{proof}

\begin{example}
Let \(\FF_8\) be the finite field with a primitive root \(\omega\). Consider the linear code \(\CC := [10,3,7]\) with the following generator matrix:
\[
G = \left(\begin{array}{ccccccccccc}
1 & 0 & 0 & \omega^4 & \omega^4 & \omega^5 & \omega^2 & \omega & \omega^5 & \omega^4 \\
0 & 1 & 0 & 0 & \omega^5 & \omega^5 & \omega^4 & \omega^4 & 1 & 1 \\
0 & 0 & 1 & \omega^3 & \omega^6 & \omega^2 & 1 & 0 & \omega^4 & \omega^5 \\
\end{array}\right).
\]
Here, we have \(GG^\top = \left(\begin{array}{cccc}
1+\omega^3 & 0 & 0 \\
0 & 1+\omega^2 & 0 \\
0 & 0 & 1
\end{array}\right)\), and the generator matrix \(G\) satisfies all conditions of Theorem~\ref{th-3.1}. 

According to Theorem~\ref{th-3.1}, we choose 
\[
\ba = \left(\begin{array}{cccccccccc} \omega^5 & 1 & 1 & 1 & 1 & 1 & 1 & 1 & 1 & 1 \end{array} \right)
\]
such that the code \(\CC_\ba\) has a one-dimensional hull and is also equivalent to \(\CC\).
\end{example}

\section{Construction of an $(\ell + 1)$-dimensional Hull from an $\ell$-dimensional Hull}\label{sec:1dhc}

This section presents a method for deriving an $(\ell + 1)$-dimensional hull from a given $\ell$-dimensional hull, where $\ell$ is a nonnegative integer. We begin by recalling some mathematical concepts from linear algebra. 

\begin{prop}\label{p-11}
Let $\CC:= [n,k]$ be a linear code defined over the finite field $\FF_q$, with generator matrix \( G \) and parity check matrix \( H \). The dimension of the intersection of \( \CC \) and its dual \( \CC^\perp \) satisfies \(\dim(\CC \cap \CC^\perp) = \ell\) if and only if the rank of the matrix formed by stacking \( G \) and \( H \) is \( n - \ell \):
\[
rank\left(\begin{array}{cc} G \\ H \end{array}\right) = n - \ell.
\]
\end{prop}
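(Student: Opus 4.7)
The plan is to identify the row space of the stacked matrix $\left(\begin{array}{c} G \\ H \end{array}\right)$ with the sum $\CC + \CC^\perp$, and then apply Grassmann's dimension formula. Since $G$ is a generator matrix of $\CC$, its row space equals $\CC$, and likewise the row space of $H$ equals $\CC^\perp$ (as $H$ is a generator matrix of the dual code). Consequently, the row space of the $n \times n$ matrix obtained by vertically stacking $G$ on top of $H$ is exactly the sum $\CC + \CC^\perp$, so its rank equals $\dim(\CC + \CC^\perp)$.

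Next, I would invoke the standard identity
\[
\dim(\CC + \CC^\perp) = \dim(\CC) + \dim(\CC^\perp) - \dim(\CC \cap \CC^\perp).
\]
Substituting $\dim(\CC) = k$ and $\dim(\CC^\perp) = n-k$, and using the definition $\hull(\CC) = \CC \cap \CC^\perp$, one gets
\[
\mathrm{rank}\left(\begin{array}{c} G \\ H \end{array}\right) = k + (n-k) - \dim(\hull(\CC)) = n - \dim(\hull(\CC)).
\]
The ``if and only if'' now follows immediately: the rank equals $n - \ell$ precisely when $\dim(\hull(\CC)) = \ell$.

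There is essentially no obstacle here; the statement is a direct bookkeeping consequence of Grassmann's formula once one recognises the row space of the stacked matrix. The only point worth emphasising for the reader is that the claimed equivalence does not depend on the particular choice of $G$ or $H$, since $\CC$ and $\CC^\perp$ are intrinsic objects and the rank is determined by their sum. This is consistent with the rank-invariance already stated in Proposition~\ref{pr-2.1}. I would present the proof as a single short paragraph consisting of the two displayed equations above together with the identifications of the row spaces.
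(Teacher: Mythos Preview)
Your argument is correct, but it proceeds by a different (dual) route from the paper. The paper identifies $\CC \cap \CC^\perp$ directly as the right kernel of the stacked matrix $A=\left(\begin{smallmatrix} G \\ H \end{smallmatrix}\right)$: indeed $Gx^\top=0$ characterises $x\in\CC^\perp$ and $Hx^\top=0$ characterises $x\in\CC$, so $Ax^\top=0$ iff $x\in\hull(\CC)$, and rank--nullity gives $\mathrm{rank}(A)=n-\ell$. You instead work with the \emph{row space} of $A$, identify it with $\CC+\CC^\perp$, and invoke Grassmann's formula. Both arguments are one-liners; the paper's version has the slight advantage of exhibiting $\hull(\CC)$ explicitly as a kernel (which is how it is used later, e.g.\ in Equations~\eqref{eq-4.5}--\eqref{eq-4.4}), while yours makes the complementary picture $\CC+\CC^\perp$ visible and ties in naturally with the LCD characterisation of Proposition~\ref{pr-2.3}.
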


\begin{proof}
We know that a vector \( x \in \mathcal{C} \) if and only if \( Hx^T = 0 \), and \( x \in \mathcal{C}^\perp \) if and only if \( Gx^T = 0 \), where \( G \) and \( H \) are the generator and parity-check matrices of \( \mathcal{C} \), respectively. Therefore, \( x \in \mathcal{C} \cap \mathcal{C}^\perp \) if and only if \( Ax^T = 0 \), where 
\[
A = \begin{pmatrix} G \\ H \end{pmatrix}.
\]

This implies that \( \dim(\mathcal{C} \cap \mathcal{C}^\perp) = \ell \) if and only if 
\[
\text{rank}\begin{pmatrix} G \\ H \end{pmatrix} = n - \ell.
\]
\end{proof}

Let \(\mathcal{C}:=[n,k]\) be a linear code over \(\FF_{q}\) with a generator matrix \(G\) and a parity check matrix \(H\). Then, \(\dim(\mathcal{C} \cap \mathcal{C}^\perp) = 1\) if and only if \(rank\left(\begin{array}{cc} G \\ H  \end{array}\right) = n-1\).

\begin{example}\label{ex:lc1dh}
Consider the linear code \(\mathcal{C}:=[n,k]\) over \(\FF_{2^m}\) with the generator matrix
\[
G = \left(\begin{array}{cc} I_k & P \end{array}\right),
\]
where \(PP^\top=\left(a_{i,j}\right)_{1\leq i,j \leq k}\) and \(a_{i,j}=1\). 

Now, we have
\[
GG^\top = I_k + PP^\top = (a_{i,j}),
\]
where \(a_{i,i}=0\) and \(a_{i,j}=1\) for \(i \neq j\). If \(k\) is odd, then \(rank(GG^\top) = k-1\). 

Consequently,
\[
rank\left(\begin{array}{cc} G \\ H  \end{array}\right) = rank\left(\begin{array}{cc} I_k + PP^\top & 0  \\ P^\top & I_{n-k} \end{array}\right) = n-1.
\]

Specifically, let \(\mathcal{C}:=[6,3,3]\) be a linear code 
over $\FF_{4}$ with generator matrix
$G=\left(\begin{array}{cc} I_k & P\\
\end{array}\right)= \left(\begin{array}{cccccc} 1 & 0& 0&1&1&1\\
0 & 1& 0&\omega&\omega^2&0\\
0 & 0& 1&0&\omega&\omega^2\\
\end{array}\right)$, where $\omega$ is a primitive root of $\FF_4$. It is easy to see that $PP^\top=\left(\begin{array}{ccc} 1 & 1 & 1\\
1 & 1 & 1\\
1 & 1 & 1\\
\end{array}\right)$. The matrix $GG^\top = I_{3}+PP^\top$ is of rank $2$, and $\CC$ is a linear code with a one-dimensional hull.
\end{example}

For \( v = (v_1, v_2, \ldots, v_n) \in \FF_q^{n} \), the support set of \( v \) is defined as 
\[
I(v) = \{i \mid 1 \leq i \leq n \text{ and } v_i \neq 0\}
\]
and the Hamming weight of \( v \) is defined as 
\[
\wt(v) = |I(v)|.
\]

Let \( M \) be an \( n \times n \) matrix over \( \FF_q \) and let \( \mathbf{u} \) be a non-zero element of \( \FF_q^{n} \) such that \( \wt(\mathbf{u}) = w \) with support set \( I = I(\mathbf{u}) = \{i_1, i_2, \ldots, i_w\} \). We define \( \text{diag}_n(\mathbf{u}) \) as an \( n \times n \) diagonal matrix with diagonal elements \( u_1, u_2, \ldots, u_n \).

According to the paper \cite{CMT18}, we define \( M_I \) as an \((n-w) \times (n-w)\) submatrix of \( M \) by deleting the \( i_1 \)-th, \( i_2 \)-th, \ldots, \( i_w \)-th columns and rows of \( M \). If \( I = \{1, 2, \ldots, n\} \), then we set \( M_I = \left( 1 \right) \), and if \( I = \emptyset \), we have \( M_{\emptyset} = M \).

In the sequel, we present Lemma~\ref{lm-3ab}, which generalizes \cite[Theorem 11]{CMT18}. Given a hull of dimension \( \ell \) of a linear code over \( \FF_q^2 \) (where \( q \) is a prime), it is possible to construct an equivalent hull of dimension \( \ell' \) for a linear code, where \( 0 \leq \ell' \leq \ell \). We extend the result from \cite[Lemma 9]{CMT18} (as presented in Lemma~\ref{lm-3a}) to construct an equivalent hull of dimension \( \ell' \) for a linear code where \( \ell-1 \leq \ell' \leq \ell+1 \) over the field \( \FF_q \), where \( q \) is a power of \( 2 \).

\begin{lemma}\cite[Lemma 9]{CMT18}\label{lm-3a}
For an integer \( t \) satisfying \( 0 \leq t \leq n - 1 \), let \( M \) be an \( n \times n \) matrix over \( \FF_{q} \) such that \( \det(M_I) = 0 \) for any subset \( I = \{1, 2, \ldots, n\} \) with \( 0 \leq |I| \leq t \). Assume that \( \mathbf{u} \) is a non-zero element of \( \FF_q^n \) satisfying \( 1 \leq \text{wt}(\mathbf{u}) \leq t + 1 \) with the supporting set \( J \). Then, we have:
$$\det(M+\text{diag}_n (\textbf{u}))=\left(\prod_{j\in J} u_j \right)\det(M_J).$$
\end{lemma}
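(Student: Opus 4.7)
The plan is to expand $\det\!\bigl(M + \mathtt{diag}_n(\mathbf{u})\bigr)$ by exploiting multilinearity of the determinant in its rows, together with the very restrictive hypothesis that every principal minor of $M$ indexed by a small subset vanishes. Write $D = \mathtt{diag}_n(\mathbf{u})$ and notice that $D$ has nonzero entries only on the diagonal positions in $J = I(\mathbf{u})$. For $j \notin J$ the $j$-th row of $M+D$ agrees with that of $M$, while for $j \in J$ the $j$-th row of $M+D$ is the sum of the $j$-th row of $M$ and $u_j\,e_j^\top$, where $e_j$ denotes the $j$-th standard basis vector.

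Applying multilinearity successively on each of the $|J|$ modified rows turns the determinant into a sum indexed by subsets $S \subseteq J$:
\[
\det(M+D) \;=\; \sum_{S \subseteq J} \Bigl(\prod_{j \in S} u_j \Bigr)\,\det(B_S),
\]
where $B_S$ is the matrix obtained from $M$ by replacing, for every $j \in S$, the $j$-th row by $e_j^\top$, and leaving all other rows untouched. The next step is to identify $\det(B_S)$ with the principal minor $\det(M_S)$: iteratively perform Laplace expansion along each "basis" row $e_j^\top$ with $j \in S$. Each such expansion picks up a single nonzero entry $1$ in position $(j,j)$, with cofactor sign $(-1)^{j+j} = +1$, and erases row $j$ together with column $j$. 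After exhausting $S$ we are left precisely with the submatrix of $M$ whose rows and columns indexed by $S$ have been removed, i.e.\ with $M_S$.

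The argument is then closed using the vanishing hypothesis: for every proper subset $S \subsetneq J$ we have $|S| \leq |J| - 1 \leq t$, and so $\det(M_S) = 0$ by assumption; the only surviving term in the sum is $S = J$, yielding
\[
\det\!\bigl(M + \mathtt{diag}_n(\mathbf{u})\bigr) \;=\; \Bigl(\prod_{j \in J} u_j\Bigr)\,\det(M_J),
\]
which is the claimed identity. (Notice the formula remains correct even when $|J| \leq t$, because then $\det(M_J) = 0$ as well, and both sides vanish.)

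The only subtlety I anticipate is the sign bookkeeping when performing the iterated Laplace expansions, since after each deletion the indices of the remaining rows and columns shift and one must confirm that the expansion is always along a diagonal position of the current submatrix. Once one checks that each cofactor sign is $(-1)^{j+j} = +1$ at every stage, the whole argument reduces to combining multilinearity with the hypothesis on small-index minors, and no further calculation is needed.
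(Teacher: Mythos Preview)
The paper does not prove this lemma; it is quoted verbatim from \cite[Lemma 9]{CMT18} and used as a black box in the proof of Lemma~\ref{lm-3ab}. Your argument is correct and is precisely the standard proof (and, as far as I can tell, the one in \cite{CMT18}): multilinearity in the rows indexed by $J$ expands $\det(M+D)$ as $\sum_{S\subseteq J}\bigl(\prod_{j\in S}u_j\bigr)\det(M_S)$, and the hypothesis $\det(M_S)=0$ for $|S|\le t$ kills every summand with $S\subsetneq J$.

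On your sign concern: rather than tracking shifts through iterated Laplace expansions, it is cleaner to permute both rows and columns of $B_S$ so that the indices in $S$ come first. Since the same permutation acts on rows and columns the sign contribution is $(\mathrm{sgn}\,\sigma)^2=1$, and the permuted matrix is block lower-triangular of the form $\left(\begin{smallmatrix} I_{|S|} & 0 \\ * & M_S \end{smallmatrix}\right)$, giving $\det(B_S)=\det(M_S)$ at once.
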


\begin{lemma}\label{lm-3ab} 
Let \( \mathcal{C} \) be a linear code over \( \FF_q \) of length \( n \) with dimension \( k \), where \( q > 3 \) is a prime power. Assume that \( \dim(\mathcal{C} \cap \mathcal{C}^\perp) = \ell \). 

Then, there exists an index \( j \) such that \( 1 \leq j \leq n \), and \( \mathbf{a} = (a_1, a_2, \ldots, a_n) \in \FF_q^n \), with \( a_i = 1 \) for all \( i \neq j \) and \( a_j \in \FF_q \setminus \{0, 1\} \), satisfying 
\[
|\dim(\mathcal{C}_\mathbf{a} \cap \mathcal{C}_\mathbf{a}^\perp) - \dim(\mathcal{C} \cap \mathcal{C}^\perp)| \leq 1.
\]
\end{lemma}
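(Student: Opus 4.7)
My plan is to use Proposition~\ref{pr-2.1}, which translates the hull-dimension comparison into a comparison of the ranks of $GG^\top$ and $G_{\ba} G_{\ba}^\top$, and then to observe that rescaling a single coordinate perturbs $GG^\top$ by a matrix of rank at most $1$.

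First, I fix a generator matrix $G = (g_1 \mid g_2 \mid \cdots \mid g_n)$ of $\CC$, with columns $g_i \in \FF_q^k$. For any $\ba = (a_1, \ldots, a_n) \in (\FF_q^*)^n$, the scaled matrix $G_{\ba}$ whose $i$-th column is $a_i g_i$ is a generator matrix of $\CC_{\ba}$, and a direct expansion of the product gives
$$G_{\ba} G_{\ba}^\top = \sum_{i=1}^n a_i^2 \, g_i g_i^\top.$$
I then pick any index $j \in \{1, 2, \ldots, n\}$ and any scalar $\mu \in \FF_q \setminus \{0, 1\}$; such a $\mu$ exists because $q > 3$. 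Setting $a_j = \mu$ and $a_i = 1$ for $i \neq j$, the formula collapses to
$$G_{\ba} G_{\ba}^\top \;=\; GG^\top \,+\, (\mu^2 - 1)\, g_j g_j^\top.$$

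The key observation is that $g_j g_j^\top$ is a $k \times k$ matrix of rank at most $1$, so by the standard linear-algebra inequality $|\operatorname{rank}(A+B)-\operatorname{rank}(A)| \leq \operatorname{rank}(B)$ we obtain
$$\bigl|\operatorname{rank}(G_{\ba} G_{\ba}^\top) - \operatorname{rank}(GG^\top)\bigr| \leq 1.$$
Since every entry of $\ba$ is nonzero, $\dim(\CC_{\ba}) = k$, and applying Proposition~\ref{pr-2.1} to both $\CC$ and $\CC_{\ba}$ translates this rank bound into
$$\bigl|\dim(\hull(\CC_{\ba})) - \dim(\hull(\CC))\bigr| \leq 1,$$
which is precisely the desired conclusion.

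There is no genuinely hard step here; the proof reduces to the rank-perturbation fact that adding a rank-at-most-one matrix changes the rank by at most one. The hypothesis $q > 3$ is used only to guarantee the existence of a scalar $\mu \in \FF_q \setminus \{0,1\}$, and the argument in fact shows the bound holds simultaneously for every index $j$ and every such $\mu$, which is stronger than the mere existence asserted in the statement. Note also that this plan does not need to invoke Lemma~\ref{lm-3a} or the submatrix machinery of \cite{CMT18}: those tools become relevant only when one wants to force the rank to actually decrease (rather than merely change by at most one), which is not required here.
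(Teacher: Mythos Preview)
Your argument is correct. The column expansion $G_{\ba}G_{\ba}^\top = GG^\top + (\mu^2-1)\,g_jg_j^\top$ is valid for any generator matrix and any coordinate $j$, the rank-one perturbation inequality is standard, and Proposition~\ref{pr-2.1} converts the rank bound into the hull-dimension bound exactly as you say.

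It is worth noting that your route is not quite the paper's. The paper first passes to a systematic generator $G=(I_k\mid P)$ and implicitly restricts to indices $j\le k$, so that scaling column $j$ alters only the single diagonal entry $(GG^\top)_{jj}$; it then phrases the perturbation as $GG^\top+\mathrm{diag}_k(\mathbf{u})$ with $\wt(\mathbf{u})=1$ and invokes the determinant identity of Lemma~\ref{lm-3a} (from \cite{CMT18}) before concluding that the rank moves by at most one. Your argument bypasses both the systematic form and Lemma~\ref{lm-3a}: by viewing $(\mu^2-1)g_jg_j^\top$ as an arbitrary rank-$\le 1$ symmetric perturbation you get the bound for every column $j\in\{1,\dots,n\}$ and every $\mu\in\FF_q\setminus\{0,1\}$, not just some $j\le k$. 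As you correctly point out, the CMT18 submatrix machinery is only needed when one wants to \emph{force} the rank to drop; for the mere $\pm 1$ bound claimed here, your rank-subadditivity argument is both shorter and more general.
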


\begin{proof}
Without loss of generality, let \( G \) be a generator matrix of \( \mathcal{C} \) of the form 
$G= \left(\begin{array}{cccc}
 I_k& P
\end{array}\right)$. For $\ba=(a_1,a_2,\ldots,a_n) \in (\FF_q^*)^n$, a generator matrix of the code 
$\CC_{\ba}=\{(a_1c_1,a_2c_2,\ldots,a_nc_n)~|~(c_1,c_2,\ldots,c_n)\in \CC\}$
is $G_\ba  = \left(\begin{array}{ccccccc}
a_1g_1 & a_2g_2 & a_3g_3 &\cdots & a_ng_n
\end{array}\right),$
where $g_i$'s are the columns of $G$.

We have  \(\det(G_{\ba}G_{\ba}^\top) = \det(GG^\top + \text{diag}_k(\textbf{u}))\), where \(u_j = a_j^2 - 1\) for \(1 \leq j \leq n\). By applying Lemma~\ref{lm-3a}, we find that \(\det(G_{\ba}G_{\ba}^\top) = u_{j} \det((GG^\top)_j)\) for some \(j\) where \(1 \leq j \leq n\). Here, \((GG^\top)_j\) denotes the submatrix of \(GG^\top\) obtained by deleting the \(j\)-th row and column. According to our hypothesis, \(\text{diag}_{k}(\textbf{u})\) acts on only one element of \(GG^\top\). Consequently, we can deduce that 
\[
\dim(\CC \cap \CC^\perp) - 1 \leq \dim(\CC_{\ba} \cap \CC_{\ba}) \leq \dim(\CC \cap \CC^\perp) + 1.
\]
This implies that 
\[
|\dim(\CC_\ba \cap \CC_\ba^\perp) - \dim(\CC \cap \CC^\perp)| \leq 1.
\]
\end{proof}
Hence, for a given code \(\CC\) with \(\dim(\CC \cap \CC^\perp) = \ell\), using a similar technique to the proof of \cite[Theorem 11]{CMT18} and Lemma~\ref{lm-3ab}, we can deduce a code \(\CC_\ba\) such that \(\dim(\CC_\ba \cap \CC_\ba^\perp) = \ell - 1\). In other words, it is possible to construct a \((\ell-1)\)-dimensional hull from a given \(\ell\)-dimensional hull.

In general, designing codes with hulls with decreasing dimensions is not difficult. In the following sections, however, our goal is quite the opposite: we aim to construct an \((\ell + 1)\)-dimensional hull of a code over \(\FF_q\) (where \(q\) is a power of \(2\)) from a given \(\ell\)-dimensional hull of a code, for \(0 \leq \ell < \min\{k, n-k\}\). A construction over the field \(\FF_2\) is presented in \cite{LSK23}.

Let $\CC := [n,k,d]$ be a linear code with $\dim(\CC\cap\CC^\perp)=\ell$, where $0\leq \ell < \min\{k,n-k\}$. Without loss of generality, a generator matrix of $\CC\cap\CC^\perp$ is of the form $G_{\ell}=\left(\begin{array}{ccccc}
I_{\ell} & Q_1 & Q_2 & P_1   
\end{array}\right)$.
Let $r_i, 1 \leq i \leq \ell$, be the rows of $G_{\ell}$.  We can extend the linearly independent set $\{r_1,r_2,\ldots,r_\ell\}$ to a basis $\{r_1,r_2,\ldots,r_\ell,r'_{\ell+1},\ldots,r'_{k}\}$ of $\CC$. Without loss of generality, we consider a generator matrix of  $\CC$ to be of the form
$G=\left(\begin{array}{ccccc}
I_{\ell} & Q_1 & Q_2 & P_1  \\
0 & \alpha & 0 & P_2 \\
0 & 0 &I_{k-\ell-1} & P_3 
\end{array}\right)$, where $\alpha\in\FF_q^*$.\\
 Denote $G_{k-\ell}=\left(\begin{array}{ccccc}
 \alpha & 0 & P_2 \\
 0 &I_{k-\ell-1} & P_3 
\end{array}\right)$. Then 
\begin{equation}\label{eq-3.s31q}
\det(G_{k-\ell}G_{k-\ell}^\top)=\alpha^{2}\det(I_{k-\ell-1}+P_3P_3^\top) + \beta'.
\end{equation}
Now setting the notation $\beta' = \det(G_{k-\ell}G_{k-\ell}^\top)-\alpha^{2}\det(I_{k-\ell-1}+P_3P_3^\top)$, we show the existence of an $\ell+1$-dimensional hull of a linear code equivalent to $\CC$ where its generator matrix $G$ satisfies $rank(GG^\top) = k-\ell$ (that is, $\CC$  is an $\ell$-dimensional hull) in the following theorem.

\begin{thm}\label{th-3.22q}
Let $q > 3$ be an even prime power. Let $\CC:=[n,k,d]$ be a linear code over $\FF_q$ with a generator matrix 
$G=\left(\begin{array}{ccccc}
I_{\ell} & Q_1 & Q_2 & P_1  \\
0 & \alpha & 0 & P_2 \\
0 & 0 &I_{k-\ell-1} & P_3 
\end{array}\right)$, where $\alpha\in\FF_q^*$ and  $G_{\ell}=\left(\begin{array}{ccccc}
I_{\ell} & Q_1 & Q_2 & P_1  
\end{array}\right)$ is a generator matrix of $\CC\cap \CC^\perp$. Assume $\det(I_{k-\ell-1}+P_3P_3^\top)\neq 0$, $Q_1=0$, and $\beta' \neq 0$ (defined in Equation~\eqref{eq-3.s31q}). Then there exists $\ba \in (\FF_q^*)^n$ such that $\dim(\CC_\ba \cap \CC_\ba^\perp) =\ell+ 1$.  
\end{thm}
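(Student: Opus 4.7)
The plan is to mimic the strategy of Theorem~\ref{th-3.1}: single out one coordinate (here, the $(\ell+1)$-th, which is the column carrying the pivot $\alpha$), replace it by $\lambda$ times itself, and look for $\lambda \in \FF_q^*$ which kills the determinant of $G_\lambda G_\lambda^\top$. Formally, I would set $\ba=(1,\ldots,1,\lambda,1,\ldots,1)$ with $\lambda$ in position $\ell+1$, so that a generator matrix of $\CC_\ba$ is
\[
G_\lambda=\left(\begin{array}{cccc}
I_\ell & 0 & Q_2 & P_1\\
0 & \lambda\alpha & 0 & P_2\\
0 & 0 & I_{k-\ell-1} & P_3
\end{array}\right),
\]
using $Q_1=0$. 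The goal will be to pick $\lambda$ so that $\mathrm{rank}(G_\lambda G_\lambda^\top) = k-\ell-1$ exactly, since then Proposition~\ref{pr-2.1} yields $\dim(\CC_\ba\cap\CC_\ba^\perp)=\ell+1$.

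The first step is to extract orthogonality relations from the assumption that $G_\ell=(I_\ell\ Q_1\ Q_2\ P_1)$ generates $\CC\cap\CC^\perp\subseteq \CC^\perp$. Since the rows of $G_\ell$ are orthogonal to every row of $G$, one obtains $G_\ell G_\ell^\top=0$, $G_\ell (0\ \alpha\ 0\ P_2)^\top=0$, and $G_\ell (0\ 0\ I_{k-\ell-1}\ P_3)^\top=0$. Using $Q_1=0$ and working in characteristic $2$, these collapse to
\[
I_\ell+Q_2Q_2^\top+P_1P_1^\top=0,\qquad P_1P_2^\top=0,\qquad Q_2=P_1P_3^\top.
\]
Substituting these identities into the product $G_\lambda G_\lambda^\top$ makes the first $\ell$ rows and columns collapse to zero, yielding the block form
\[
G_\lambda G_\lambda^\top=\begin{pmatrix}0_{\ell\times\ell} & 0 \\ 0 & M(\lambda)\end{pmatrix},\qquad
M(\lambda)=\begin{pmatrix}\lambda^2\alpha^2+P_2P_2^\top & P_2P_3^\top\\ P_3P_2^\top & I_{k-\ell-1}+P_3P_3^\top\end{pmatrix},
\]
which is just $G_{k-\ell}G_{k-\ell}^\top$ with $\alpha$ replaced by $\lambda\alpha$.

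Next, by the block-determinant expansion used to define $\beta'$ in Equation~\eqref{eq-3.s31q}, one gets $\det M(\lambda)=\lambda^2\alpha^2\det(I_{k-\ell-1}+P_3P_3^\top)+\beta'$. Setting this to $0$ and using that $x\mapsto x^2$ is the Frobenius bijection on $\FF_q$, there is a unique
\[
\lambda^2=\frac{\beta'}{\alpha^2\det(I_{k-\ell-1}+P_3P_3^\top)}\in\FF_q,
\]
which is well-defined and nonzero thanks to the hypotheses $\beta'\neq 0$ and $\det(I_{k-\ell-1}+P_3P_3^\top)\neq 0$, and hence $\lambda\in\FF_q^*$. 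Moreover $\lambda\neq 1$: if $\lambda=1$ then $\det M(1)=\det(G_{k-\ell}G_{k-\ell}^\top)=0$, but the same block decomposition applied to $GG^\top$ shows $\mathrm{rank}(GG^\top)=\mathrm{rank}(G_{k-\ell}G_{k-\ell}^\top)$, which must equal $k-\ell$ by Proposition~\ref{pr-2.1} and the hypothesis $\dim(\hull(\CC))=\ell$; so $G_{k-\ell}G_{k-\ell}^\top$ is invertible and $\lambda=1$ is ruled out.

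The step I expect to require the most care is pinning down that the hull dimension increases by exactly one rather than by more. From the block form, $\mathrm{rank}(G_\lambda G_\lambda^\top)=\mathrm{rank}(M(\lambda))\leq k-\ell-1$, giving $\dim(\CC_\ba\cap\CC_\ba^\perp)\geq\ell+1$. For the reverse inequality I would invoke Lemma~\ref{lm-3ab}, which is applicable precisely because $\lambda\in\FF_q\setminus\{0,1\}$ and only one coordinate has been rescaled; it forces $|\dim(\CC_\ba\cap\CC_\ba^\perp)-\ell|\leq 1$. Combining the two bounds yields $\dim(\CC_\ba\cap\CC_\ba^\perp)=\ell+1$, completing the proof.
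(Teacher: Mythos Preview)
Your proof is correct and shares the paper's core strategy: take $\ba=(1,\ldots,1,\lambda,1,\ldots,1)$ with $\lambda$ in position $\ell+1$, use $Q_1=0$ together with the hull hypothesis to obtain the block form $G_\lambda G_\lambda^\top=\mathrm{diag}(0_{\ell},M(\lambda))$, and solve $\det M(\lambda)=\lambda^2\alpha^2\det(I_{k-\ell-1}+P_3P_3^\top)+\beta'=0$ for $\lambda\in\FF_q^*$ via the Frobenius bijection. The one substantive difference lies in how you bound the hull dimension from above by $\ell+1$. The paper works harder here: it writes down the parity-check matrix $H_\lambda$ explicitly, introduces the auxiliary subcode $\CC_1$ generated by $(0\ 0\ I_{k-\ell-1}\ P_3)$, and combines the inclusion $\CC_1+\CC_\lambda^\perp\subseteq\CC_\lambda+\CC_\lambda^\perp$ with a rank computation inherited from Proposition~\ref{p-11} to conclude $\dim(\CC_\lambda\cap\CC_\lambda^\perp)\leq\ell+1$. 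You bypass all of that by appealing to Lemma~\ref{lm-3ab}, which is shorter and avoids the parity-check matrix altogether.

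One caution on that appeal: Lemma~\ref{lm-3ab} as \emph{stated} is only an existence assertion (``there exists an index $j$ \ldots''), so it does not literally cover your particular choice $j=\ell+1$ and your particular $\lambda$. What you actually need---and what the proof of that lemma really shows---is that because $Q_1=0$ the perturbation $G_\lambda G_\lambda^\top-GG^\top=(\lambda^2-1)\alpha^2\,e_{\ell+1}e_{\ell+1}^\top$ has rank one, so $\mathrm{rank}(G_\lambda G_\lambda^\top)$ differs from $\mathrm{rank}(GG^\top)=k-\ell$ by at most one. Saying this directly (rather than citing the lemma's statement) would make your upper-bound step airtight; your observation that $\lambda\neq 1$ is then not even needed for the argument, though it is a pleasant sanity check.
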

\begin{proof}
	
		Given the generator matrix of the code $\CC$ as
		\[
		G = \begin{pmatrix}
			I_{\ell} & Q_1 & Q_2 & P_1 \\
			0 & \alpha & 0 & P_2 \\
			0 & 0 & I_{k-\ell-1} & P_3
		\end{pmatrix},
		\]
		we observe that $ G $  is row equivalent to
		\[
		G = \begin{pmatrix}
			I_{\ell} & 0 & 0 & R_1 \\
			0 & 1 & 0 & \alpha^{-1} P_2 \\
			0 & 0 & I_{k-\ell-1} & P_3
		\end{pmatrix},
	   \]
		where $ R_1 $ is obtained after suitable row operations.
		
		Now, consider a parity-check matrix $ H $ of the code $\CC$, defined by
		\[
		H = \begin{pmatrix}
			-R_1^\top & -\alpha^{-1} P_2^\top & -P_3^\top & I_{n-k}
		\end{pmatrix}.
		\]
		This matrix is row equivalent to the following block matrix form
		\[
		H = \begin{pmatrix}
			I_{\ell} & Q_1 & Q_2 & P_1 \\
			P_1'^\top & -\alpha^{-1} P_2'^\top & -P_3'^\top & I_{n-k-\ell}
		\end{pmatrix}.
		\]

Since $\dim(\CC\cap\CC^\perp)=\ell$, by Proposition~\ref{p-11}, 
\begin{equation}\label{eq-4.5}
\textit{rank}\left(\begin{array}{cccc} G \\  H \end{array}\right) 
= \textit{rank}\left(\begin{array}{cccc} 
I_{\ell} & Q_1 & Q_2 & P_1  \\
0 & \alpha & 0 & P_2 \\
0 & 0 &I_{k-\ell-1} & P_3\\ 
I_{\ell} & Q_1 & Q_2 & P_1  \\
{P'}_1^\top & -\alpha^{-1} {P'}_2^\top &-{P'}_3^\top & I_{n-k-\ell}  
\end{array}\right) = n-\ell. 
\end{equation}
Then
\begin{eqnarray}\label{eq-4.4}
\textit{rank}\left(\begin{array}{cccc}
I_{\ell} & Q_1 & Q_2 & P_1  \\
0 & 0 &I_{k-\ell-1} & P_3\\ 
{P'}_1^\top & -\alpha^{-1} {P'}_2^\top &-{P'}_3^\top & I_{n-k-\ell}  
\end{array}\right) & = & n-\ell-1 \text{ and } \nonumber\\
\textit{rank}\left(\begin{array}{cccc}
0 & \alpha & 0 & P_2 \\
0 & 0 &I_{k-\ell-1} & P_3\\ 
I_{\ell} & Q_1 & Q_2 & P_1  \\
{P'}_1^\top & -\alpha^{-1} {P'}_2^\top &-{P'}_3^\top & I_{n-k-\ell} 
\end{array}\right) & = & n-\ell.
\end{eqnarray}

To demonstrate the existence of an $\ell+1$-dimensional hull of a linear code equivalent to $\CC$, we aim to show that for any $\ba \in (\FF_q^*)^n$, the code $\CC_{\ba}$ is equivalent to $\CC$ and that $\dim\left(\CC_{\ba}\cap\CC_{\ba}^\perp\right) = \ell + 1$.
For $\lambda \in \FF_q^*$, denote
$$\CC_{\lambda} = \{( c_1,c_2,\ldots,c_\ell,\lambda c_{\ell+1},c_{\ell+2},\ldots,c_n)~|~(c_1,c_2,\ldots,c_n)\in \CC\}.$$
 Since $Q_1=0$, then we consider a generator matrix $G_\lambda$ and a parity check matrix $H_\lambda$ of $\CC_\lambda$ as
$$G_\lambda = \left(\begin{array}{cccc}
I_{\ell} & 0 & Q_2 & P_1  \\
0 & \lambda\alpha & 0 & P_2 \\
0 & 0 &I_{k-\ell-1} & P_3
\end{array}\right) \text{ and } 
H_\lambda = \left(\begin{array}{cccc}
I_{\ell} & 0 & Q_2 & P_1  \\
{P'}_1^\top & -\lambda^{-1}\alpha^{-1} {P'}_2^\top &-{P'}_3^\top & I_{n-k-\ell}
\end{array}\right), \text{ respectively}.$$ 
Let us consider $\CC_1$ as a linear code generated by 
$G_1 = \left(\begin{array}{cccc} 0 & 0 &I_{k-\ell-1} & P_3 \end{array}\right)$.
Moreover, $\CC_1 + \CC^\perp_\lambda \subseteq \CC_\lambda + \CC^\perp_\lambda$ and 
$\CC_1 \cap \CC^\perp_\lambda = \{0\}$ (since
$\textit{rank}\left(\begin{array}{cccc}
0 & 0 &I_{k-\ell-1} & P_3 \\
I_{\ell} & Q_1 & Q_2 & P_1  \\
{P'}_1^\top & -\lambda^{-1}\alpha^{-1} {P'}_2^\top &-{P'}_3^\top & I_{n-k-\ell}
\end{array}\right) = n-\ell-1$ from  Equation~\eqref{eq-4.4}).
From the fact  $\CC_1 + \CC^\perp_\lambda \subseteq \CC_\lambda + \CC^\perp_\lambda$, one has:
$$\dim\left(\CC_1 + \CC^\perp_\lambda\right) \leq \dim\left(\CC_\lambda + \CC^\perp_\lambda\right).$$ Therefore,  
$$\dim(\CC_1) + \dim(\CC^\perp_\lambda) - \dim(\CC_1 \cap \CC^\perp_\lambda)
\leq \dim(\CC_\lambda) + \dim(\CC^\perp_\lambda) - \dim(\CC_\lambda \cap \CC^\perp_\lambda).$$ Consequently,  $$\dim(\CC_\lambda \cap \CC^\perp_\lambda) \leq \ell+ 1.$$
 Furthermore, $ \left(\begin{array}{cccc}
I_{\ell} & 0 & Q_2 & P_1  \\
\end{array}\right)$ is a part of
$H_\lambda = \left(\begin{array}{cccc}
I_{\ell} & 0 & Q_2 & P_1  \\
{P'}_1^\top & -\lambda^{-1}\alpha^{-1} {P'}_2^\top &-{P'}_3^\top & I_{n-k-\ell}
\end{array}\right)$, one gets  $\textit{rank} \left(\begin{array}{cc}
G_{\lambda}\\
H_{\lambda} 
\end{array}\right)\leq n-\ell.$ By Proposition \ref{p-11}, we have $\dim(\CC_\lambda \cap \CC^\perp_\lambda)\geq \ell$. Hence, $\ell\leq \dim(\CC_\lambda \cap \CC^\perp_\lambda)\leq \ell+1$.

For the simplicity, set $G_\lambda=\left(\begin{array}{cc}
I_{\ell} & P\\
0 & G'_{\lambda} 
\end{array}\right),$ where $P= \left(\begin{array}{cccc}
  0 & Q_2 & P_1  \\
\end{array}\right)$ and $G'_{\lambda}= \left(\begin{array}{cccc}
 \lambda\alpha & 0 & P_2 \\
 0 &I_{k-\ell-1} & P_3
\end{array}\right)$. 
Further, 
$$G_\lambda G_\lambda^\top = \left(\begin{array}{cccc}
0 & 0 & 0 \\
0& \lambda^{2}\alpha^{2}+P_2P_2^\top &P_2P_3^\top \\
0 & P_3P_2^\top & I_{k-\ell-1}+P_3P_3^\top
\end{array}\right)=\left(\begin{array}{cccc}
0 & 0 \\
0& G'_{\lambda}{G'}_{\lambda}^\top
\end{array}\right),$$ as $\left(\begin{array}{ccccc}
I_{\ell} & 0 & Q_2 & P_1  \\
\end{array}\right)$ is a part of generator matrix of $\CC_{\lambda}\cap \CC_{\lambda}^\perp$. Alternatively, it is enough to show that for a suitable non-zero $\lambda$,
$\det(G'_\lambda {G'}_\lambda^\top)=0$.
Here, $\det(G_{k-\ell}G_{k-\ell}^\top)=\alpha^{2}\det(I_{k-\ell-1}+P_3P_3^\top) + \beta'$. Then from the form of $G'_{\lambda}{G'}_{\lambda}^\top$, we have $\det(G'_{\lambda}{G'}_{\lambda}^\top) = \lambda^2\alpha^2\det(I_{k-\ell-1}+P_3P_3^\top) + \beta'$. Since $q$ is even and by the hypothesis $\det(I_{k-\ell-1}+P_3P_3^\top) \neq 0 $ and $\beta' \neq 0$, there always exists 
$\lambda \in \FF_q^*$ such that $\lambda^2\alpha^{2} \det(I_{k-1}+P_2P_2^\top) + \beta' = 0 \implies \det(G'_{\lambda}{G'}_{\lambda}^\top) = 0$. 
\end{proof}

Let $\CC$ be a linear code with a generator matrix 
$G=\left(\begin{array}{cccc}
\alpha & 0 & P_1  \\
0 & I_{k-1} & P_2 
\end{array}\right)$, where $\alpha\in\FF_q^*$. Then 
\begin{equation}\label{eq-3.31}
\det(GG^\top) = \alpha^2 \det(I_{k-1}+P_2P_2^\top) + \beta.
\end{equation}
Further, setting the notation $\beta = \det(GG^\top) - \alpha^2 \det(I_{k-1}+P_2P_2^\top)$, we show the existence of a one-dimensional hull of a linear code equivalent to $\CC$ where its generator matrix $G$ satisfies that $GG^\top$ is invertible (that is, $\CC$  is an LCD code which implies $\ell = 0$) in the following corollary.
\begin{cor}\label{th-3.22}
Let $q > 3$ be an even prime power. Let $\CC:=[n,k,d]$ be a linear code over $\FF_q$ with a generator matrix 
$G = \left(\begin{array}{cccc}
\alpha & 0 & P_1  \\
0 & I_{k-1} & P_2 
\end{array}\right)$, where $\alpha\in\FF_q^*$.
Assume $\det(GG^\top) \neq 0$  and 
$\beta \neq 0$. Then there exists 
$\ba \in (\FF_q^*)^n$ such that $\dim(\CC_\ba \cap \CC_\ba^\perp) = 1$.  
\end{cor}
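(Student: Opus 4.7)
The plan is to observe that this corollary is precisely the $\ell = 0$ instance of Theorem~\ref{th-3.22q}, and to verify that every hypothesis, every auxiliary quantity, and the conclusion translate correctly across that specialisation.

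First, by Proposition~\ref{pr-2.2} the assumption $\det(GG^\top) \neq 0$ is equivalent to $\CC$ being LCD, i.e.\ to $\dim(\CC\cap\CC^\perp) = 0 = \ell$. Setting $\ell = 0$ in Theorem~\ref{th-3.22q} makes the hull block $G_\ell$ together with the columns $Q_1$, $Q_2$, $P_1$ disappear, so the side condition $Q_1 = 0$ becomes vacuous; what remains of the theorem's generator matrix, after the relabelling $(P_2, P_3) \mapsto (P_1, P_2)$, is exactly the $G$ of the corollary. The theorem's invariant $\beta' = \det(G_{k-\ell} G_{k-\ell}^\top) - \alpha^2 \det(I_{k-\ell-1} + P_3 P_3^\top)$ specialises at $\ell = 0$ to $\det(GG^\top) - \alpha^2 \det(I_{k-1} + P_2 P_2^\top)$, which is the $\beta$ of Equation~\eqref{eq-3.31}; hence the theorem's hypothesis $\beta' \neq 0$ translates to the corollary's $\beta \neq 0$.

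With the dictionary in hand, Theorem~\ref{th-3.22q} supplies $\ba \in (\FF_q^*)^n$ with $\dim(\CC_\ba \cap \CC_\ba^\perp) = \ell + 1 = 1$, which is the conclusion sought. The one subtlety I would flag, and which I expect to be the only nontrivial point in writing the proof out, is that the proof engine of Theorem~\ref{th-3.22q} has to solve $\lambda^2 \alpha^2 \det(I_{k-1} + P_2 P_2^\top) + \beta = 0$ for a non-zero $\lambda \in \FF_q^*$; this requires $\det(I_{k-1} + P_2 P_2^\top) \neq 0$ (read as part of the implicit non-degeneracy setup of the corollary) and uses crucially that $q$ is even, so that squaring is a bijection on $\FF_q$ and a square root always exists. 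Once such $\lambda$ is fixed, taking $\ba = (\lambda, 1, \ldots, 1)$ yields an equivalent code $\CC_\ba$ with $\det(G_\ba G_\ba^\top) = 0$, while the dimension-count from the proof of Theorem~\ref{th-3.22q} (via Proposition~\ref{p-11}) pins the hull dimension of $\CC_\ba$ to exactly $1$.
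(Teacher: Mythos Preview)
Your proposal is correct and matches the paper's approach exactly: the paper gives no separate proof of the corollary, presenting it immediately after Theorem~\ref{th-3.22q} as the $\ell=0$ specialisation, which is precisely the reduction you carry out. Your observation that the hypothesis $\det(I_{k-1}+P_2P_2^\top)\neq 0$ from Theorem~\ref{th-3.22q} is not explicitly listed in the corollary---yet is needed to solve $\lambda^2\alpha^2\det(I_{k-1}+P_2P_2^\top)+\beta=0$ for $\lambda\in\FF_q^*$---is a valid point about the paper's statement itself rather than a defect in your argument.
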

\begin{example}
Let $\CC:=[4,2,2]$ be a linear code over $\FF_4$ with generator matrix 
$G = \left(\begin{array}{cccc}
1 & 0 & 1 & 0 \\
0 & 1 & w^2 & w^2 
\end{array}\right)$, where $w$ is a primitive root of $\FF_4$. 
Here, $\det(GG^\top) = 1 + w^2$. Then, by Corollary~\ref{th-3.22}, the code $\CC$ is equivalent to a linear code $\CC_{\lambda}$ with one-dimensional hull. Consider a generator matrix $G_{\lambda}$ of $\CC_\lambda$ of the form
$G_\lambda = \left(\begin{array}{cccc}
w & 0 & 1 & 0 \\
0 & w^2 & w^2 & 1 
\end{array}\right).$ Then by Theorem \ref{th-3.22q}, we have there exists 
$\ba = \left(\begin{array}{cccc} 1 & w^2 & 1& 1\end{array}\right)$ such that 
$\CC_\ba$ has a two-dimensional hull. Note that $\CC$ is equivalent to a self-dual code.
\end{example}

Let \( x \) be a transcendental element over \( \FF_q \), where \( q \geq 3 \). Define \( \mathcal{P}_k = \{ f \in \FF_q[x] : \deg(f) \leq k - 1 \} \). We consider the corresponding function field \( \FF_q(x) \) associated with \( \FF_q[x] \). For a vector \( \textbf{b} = (\alpha_1, \alpha_2, \ldots, \alpha_n) \), where the \( \alpha_i \) are distinct elements in \( \FF_q \) (with \( n \leq q \)), we define the code \( RS_k(\textbf{b}) = \{ (f(\alpha_1), f(\alpha_2), \ldots, f(\alpha_n)) \mid f \in \mathcal{P}_k \} \) as a \( k \)-dimensional Reed-Solomon code. 

Let \( h(x) = \prod_{i=1}^n (x - \alpha_i) \), and denote the derivative of \( h \) with respect to \( x \) as \( h'(x) \). We associate \( RS_k(\textbf{b}) \) with the parameters 
\[
RS_k(\textbf{b}) = [n, k, n-k+1]
\]
Its dual code is given by 
\[
RS_k(\textbf{b})^\perp = [n, n-k, k+1].
\] 

At the end of this section, we shall construct a one-dimensional hull from a given linear code with a linear complementary dual (LCD) property. Specifically, we will create an extended code from an existing LCD code and determine the one-dimensional hull of this extended code.

\begin{construction}\label{con-1}
Let \( \FF_{2^t} \) be an extended binary field. We define \( RS_k(\textbf{b}) = \{ (f(\alpha_1), f(\alpha_2), \ldots, f(\alpha_n)) \mid f \in \mathcal{P}_k \} \) as a \( k \)-dimensional Reed-Solomon code, where \( n \leq 2^t - 1 \) and the \( \alpha_i \) are distinct elements in \( \FF_q^* \). Thus, we have \( RS_k(\textbf{b}) = [n, k, n-k+1] \), which is an MDS code. Since \( 2^t \geq 4 \), by \cite[Corollary 14]{CMT18}, the code \( [n, k, n-k+1] \) is an LCD code. 

Let \( G \) be a generator matrix of the code \( [n, k, n-k+1] \) such that \( \det\left(GG^\top\right) \neq 0 \). We will construct a new code \( \tilde{\CC} \) with a generator matrix defined as 
\[
\tilde{G} = \begin{pmatrix}
\alpha & \mathbf{P} \\
0 & G 
\end{pmatrix}
\]
where \( \det\left(\tilde{G}\tilde{G}^\top\right) \neq \det\left(GG^\top\right) \), \( \mathbf{P} = (\beta_1, \beta_2, \ldots, \beta_n) \) with \( \beta_i \in \FF_q \), and \( \alpha \in \FF_q^* \). 

If \( \tilde{G}\tilde{G}^\top = 0 \), then the code \( \tilde{\CC} \) is given by \( [n+1, k] \) and has a one-dimensional hull. Also, if \( \tilde{G}\tilde{G}^\top \neq 0 \), then by Corollary~\ref{th-3.22}, we conclude that \( \tilde{\CC} \) is equivalent to a linear code \( \tilde{\CC}_{\lambda} \) that possesses a one-dimensional hull.
\end{construction}

We will now provide some numerical examples of Construction~\ref{con-1}.

\begin{example}
Let $\omega$ is a primitive root of $\FF_4$ and $\CC:=[3,1,3]$ be a linear code over $\FF_4$ with generator matrix 
$G = \left(\begin{array}{cccc}
1 & 1 & 1  
\end{array}\right)$. Now, by Construction \ref{con-1}, we define $\tilde{\CC}$ with generator matrix
$\tilde{G}  = \left(\begin{array}{cccc}
\omega^2 &\omega&1&0\\
0& 1 & 1 & 1  
\end{array}\right).$ It is easy to see that $\det\left(\tilde{G}\tilde{G}^\top\right)\neq \det\left(GG^\top\right)$. However, $\det\left(\tilde{G}\tilde{G}^\top\right)=\omega.$ Hence, $\tilde{\CC}=[4,2,3]$ is equivalent to a linear code  with one-dimensional hull.
\end{example}

\begin{example}
Let $\omega$ is a primitive root of $\FF_8$ and $\CC_1:=[7,1,7]$, $\CC_2:=[7,3,5]$, and $\CC_3:=[7,5,2]$ be three linear codes over $\FF_4$ with generator matrices 
$G_1 = \left(\begin{array}{ccccccc}
1 & 1 & 1 &1 & 1 & 1 &1  
\end{array}\right)$, \\
$G_2 = \left(\begin{array}{ccccccc}
1 & 1 & 1 &1 & 1 & 1 &1 \\
1 & \omega & \omega^2 & \omega^3 & \omega^4 & \omega^5 & \omega^6\\
1& \omega^6 & \omega^5 & \omega^4 & \omega^3 & \omega^2 & \omega 
\end{array}\right)$ and $G_3 = \left(\begin{array}{ccccccc}
1 & 1 & 1 &1 & 1 & 1 &1 \\
1 & \omega & \omega^2 & \omega^3 & \omega^4 & \omega^5 & \omega^6\\
1& \omega^6 & \omega^5 & \omega^4 & \omega^3 & \omega^2 & \omega \\
1& \omega^2 & \omega^4 & \omega^6 & \omega & \omega^3 & \omega^5 \\
1& \omega^5 & \omega^3 & \omega & \omega^6 & \omega^4 & \omega^2
\end{array}\right)$, respectively. Now, by Construction \ref{con-1}, we define $\tilde{\CC}_1$, $\tilde{\CC}_2$ and $\tilde{\CC}_3$ with generator matrices\\
$\tilde{G}_1  = \left(\begin{array}{cccccccc}
\omega^5&\omega^6&\omega^3&\omega^4&\omega &\omega^2&1&0\\
0& 1 & 1 & 1 &1 & 1 & 1 &1  
\end{array}\right)$, $\tilde{G}_2  = \left(\begin{array}{cccccccc}
\omega^5&\omega^6&\omega^3&\omega^4&\omega &\omega^2&1&0\\
0& 1 & 1 & 1 &1 & 1 & 1 &1 \\
0& 1 & \omega & \omega^2 & \omega^3 & \omega^4 & \omega^5 & \omega^6\\
0& 1& \omega^6 & \omega^5 & \omega^4 & \omega^3 & \omega^2 & \omega 
\end{array}\right)$ and \\
$\tilde{G}_2  = \left(\begin{array}{cccccccc}
\omega^5&\omega^6&\omega^3&\omega^4&\omega &\omega^2&1&0\\
0& 1 & 1 & 1 &1 & 1 & 1 &1 \\
0& 1 & \omega & \omega^2 & \omega^3 & \omega^4 & \omega^5 & \omega^6\\
0& 1& \omega^6 & \omega^5 & \omega^4 & \omega^3 & \omega^2 & \omega \\
0& 1& \omega^2 & \omega^4 & \omega^6 & \omega & \omega^3 & \omega^5 \\
0& 1& \omega^5 & \omega^3 & \omega & \omega^6 & \omega^4 & \omega^2
\end{array}\right)$, respectively. It is easy to see that $\det\left(\tilde{G}_i\tilde{G}_i^\top\right)\neq \det\left(G_iG_i^\top\right)$ and $\det\left(\tilde{G}_i\tilde{G}_i^\top\right)=\omega^3$ for $1\leq i \leq 3$. Hence, $\tilde{\CC}_1=[8,2,7]$, $\tilde{\CC}_2=[8,4,4]$ and $\tilde{\CC}_3=[8,6,2]$ are equivalent to linear codes  with one-dimensional hull.
\end{example}
\begin{example}
Let $\omega$ is a primitive root of $\FF_{16}$ and $\CC_1:=[15,1,15], \CC_2:=[15,3,13], \CC_3:=[15,5,11]$, $\CC_4:=[15,7,9], \ldots,\CC_7:=[15,15,1]$  be seven linear codes over $\FF_4$ with generator matrices 

$G_1 = \left(\begin{array}{ccccccccccccccc}
1 & 1 & 1 &1 & 1 & 1 &1 & 1 & 1 & 1 &1 & 1 & 1 &1&1  
\end{array}\right)$, 

$G_2 = \left(\begin{array}{ccccccccccccccc}
1 & 1 & 1 &1 & 1 & 1 &1 & 1 & 1 & 1 &1 & 1 & 1 &1&1 \\
1 & \omega & \omega^2 & \omega^3 & \omega^4 & \omega^5 & \omega^6 & \omega^7 & \omega^8 & \omega^9 & \omega^{10} & \omega^{11} & \omega^{12} &\omega^{13}&\omega^{14}\\
1&\omega^{14}& \omega^{13} & \omega^{12} & \omega^{11} & \omega^{10} & \omega^9 & \omega^8& \omega^7 & \omega^6 & \omega^5 & \omega^4 & \omega^3 & \omega^2 & \omega 
\end{array}\right)$, 

$G_3 = \left(\begin{array}{ccccccccccccccc}
1 & 1 & 1 &1 & 1 & 1 &1 & 1 & 1 & 1 &1 & 1 & 1 &1&1 \\
1 & \omega & \omega^2 & \omega^3 & \omega^4 & \omega^5 & \omega^6 & \omega^7 & \omega^8 & \omega^9 & \omega^{10} & \omega^{11} & \omega^{12} &\omega^{13}&\omega^{14}\\
1&\omega^{14}& \omega^{13} & \omega^{12} & \omega^{11} & \omega^{10} & \omega^9 & \omega^8& \omega^7 & \omega^6 & \omega^5 & \omega^4 & \omega^3 & \omega^2 & \omega \\
1& \omega^2 & \omega^4 & \omega^6 & \omega^8 & \omega^{10} & \omega^{12} & \omega^{14} & \omega & \omega^3 & \omega^5 & \omega^7 & \omega^9 & \omega^{11} & \omega^{13} \\
1& \omega^{13} & \omega^{11} &\omega^{9} & \omega^{7} & \omega^5 & \omega^3 & \omega & \omega^{14} & \omega^{12} & \omega^{10} &\omega^8&\omega^6 & \omega^4 & \omega^2
\end{array}\right)$, 

$G_4 = \left(\begin{array}{ccccccccccccccc}
1 & 1 & 1 &1 & 1 & 1 &1 & 1 & 1 & 1 &1 & 1 & 1 &1&1 \\
1 & \omega & \omega^2 & \omega^3 & \omega^4 & \omega^5 & \omega^6 & \omega^7 & \omega^8 & \omega^9 & \omega^{10} & \omega^{11} & \omega^{12} &\omega^{13}&\omega^{14}\\
1&\omega^{14}& \omega^{13} & \omega^{12} & \omega^{11} & \omega^{10} & \omega^9 & \omega^8& \omega^7 & \omega^6 & \omega^5 & \omega^4 & \omega^3 & \omega^2 & \omega \\
1& \omega^2 & \omega^4 & \omega^6 & \omega^8 & \omega^{10} & \omega^{12} & \omega^{14} & \omega & \omega^3 & \omega^5 & \omega^7 & \omega^9 & \omega^{11} & \omega^{13} \\
1& \omega^{13} & \omega^{11} &\omega^{9} & \omega^{7} & \omega^5 & \omega^3 & \omega & \omega^{14} & \omega^{12} & \omega^{10} &\omega^8&\omega^6 & \omega^4 & \omega^2
\end{array}\right), \cdots$,
respectively. Now, by Construction \ref{con-1}, we define $\tilde{\CC}_1, \tilde{\CC}_2, \tilde{\CC}_3, \tilde{\CC}_4, \ldots$ with generator matrices

$\tilde G_1 = \left(\begin{array}{cccccccccccccccc}
\omega^{10} & \omega^{9} & 1 & \omega^5 & \omega^3 & \omega^8 & \omega & \omega^2& \omega^7 & \omega^{13} & \omega^4 & \omega^{14} & \omega^{12} & \omega^{11} & \omega^6 & 0\\
0&1 & 1 & 1 &1 & 1 & 1 &1 & 1 & 1 & 1 &1 & 1 & 1 &1&1  
\end{array}\right)$, 

$\tilde G_2 = \left(\begin{array}{cccccccccccccccc}
\omega^{10} & \omega^{9} & 1 & \omega^5 & \omega^3 & \omega^8 & \omega & \omega^2& \omega^7 & \omega^{13} & \omega^4 & \omega^{14} & \omega^{12} & \omega^{11} & \omega^6 & 0\\
0&1 & 1 & 1 &1 & 1 & 1 &1 & 1 & 1 & 1 &1 & 1 & 1 &1&1 \\
0&1 & \omega & \omega^2 & \omega^3 & \omega^4 & \omega^5 & \omega^6 & \omega^7 & \omega^8 & \omega^9 & \omega^{10} & \omega^{11} & \omega^{12} &\omega^{13}&\omega^{14}\\
0&1&\omega^{14}& \omega^{13} & \omega^{12} & \omega^{11} & \omega^{10} & \omega^9 & \omega^8& \omega^7 & \omega^6 & \omega^5 & \omega^4 & \omega^3 & \omega^2 & \omega 
\end{array}\right)$, 

$\tilde G_3 = \left(\begin{array}{cccccccccccccccc}
\omega^{10} & \omega^{9} & 1 & \omega^5 & \omega^3 & \omega^8 & \omega & \omega^2& \omega^7 & \omega^{13} & \omega^4 & \omega^{14} & \omega^{12} & \omega^{11} & \omega^6 & 0\\
0&1 & 1 & 1 &1 & 1 & 1 &1 & 1 & 1 & 1 &1 & 1 & 1 &1&1 \\
0&1 & \omega & \omega^2 & \omega^3 & \omega^4 & \omega^5 & \omega^6 & \omega^7 & \omega^8 & \omega^9 & \omega^{10} & \omega^{11} & \omega^{12} &\omega^{13}&\omega^{14}\\
0&1&\omega^{14}& \omega^{13} & \omega^{12} & \omega^{11} & \omega^{10} & \omega^9 & \omega^8& \omega^7 & \omega^6 & \omega^5 & \omega^4 & \omega^3 & \omega^2 & \omega \\
0&1& \omega^2 & \omega^4 & \omega^6 & \omega^8 & \omega^{10} & \omega^{12} & \omega^{14} & \omega & \omega^3 & \omega^5 & \omega^7 & \omega^9 & \omega^{11} & \omega^{13} \\
0&1& \omega^{13} & \omega^{11} &\omega^{9} & \omega^{7} & \omega^5 & \omega^3 & \omega & \omega^{14} & \omega^{12} & \omega^{10} &\omega^8&\omega^6 & \omega^4 & \omega^2
\end{array}\right)$, 

$\tilde G_4 = \left(\begin{array}{cccccccccccccccc}
\omega^{10} & \omega^{9} & 1 & \omega^5 & \omega^3 & \omega^8 & \omega & \omega^2 & \omega^7 & \omega^{13} & \omega^4 & \omega^{14} & \omega^{12} & \omega^{11} & \omega^6 & 0\\
0&1 & 1 & 1 &1 & 1 & 1 &1 & 1 & 1 & 1 &1 & 1 & 1 &1&1 \\
0&1 & \omega & \omega^2 & \omega^3 & \omega^4 & \omega^5 & \omega^6 & \omega^7 & \omega^8 & \omega^9 & \omega^{10} & \omega^{11} & \omega^{12} &\omega^{13}&\omega^{14}\\
0&1&\omega^{14}& \omega^{13} & \omega^{12} & \omega^{11} & \omega^{10} & \omega^9 & \omega^8& \omega^7 & \omega^6 & \omega^5 & \omega^4 & \omega^3 & \omega^2 & \omega \\
0&1& \omega^2 & \omega^4 & \omega^6 & \omega^8 & \omega^{10} & \omega^{12} & \omega^{14} & \omega & \omega^3 & \omega^5 & \omega^7 & \omega^9 & \omega^{11} & \omega^{13} \\
0&1& \omega^{13} & \omega^{11} &\omega^{9} & \omega^{7} & \omega^5 & \omega^3 & \omega & \omega^{14} & \omega^{12} & \omega^{10} &\omega^8&\omega^6 & \omega^4 & \omega^2
\end{array}\right), \cdots$,
respectively. It is easy to see that $\det\left(\tilde{G}_i\tilde{G}_i^\top\right)\neq \det\left(G_iG_i^\top\right)$, and $\det\left(\tilde{G}_i\tilde{G}_i^\top\right)=\omega^5$ for $1\leq i \leq 7$. Hence, $\tilde{\CC}_1=[16,2,15], \tilde{\CC}_2=[16,4,11], \tilde{\CC}_3=[16,6,9], \tilde{\CC}_4=[16,8,7], \cdots$ are equivalent to linear codes  with one-dimensional hull.
\end{example}
\begin{example}
Let $\omega$ is a primitive root of $\FF_{2^t}$ and $\CC_s:=[2^t-1,2s+1,2^t-2s-1]$ be linear code with generator matrix
$G_s = \left(\begin{array}{ccccc}
1 & 1 & 1 &\cdots&1 \\
1 & \omega & \omega^2 & &\omega^{2^t-2}\\
1 & \omega^{-1} & \omega^{-2} & \cdots &\omega^{-(2^t-2)}\\
\vdots & \vdots & \vdots & \ddots & \vdots \\
1 & \omega^{s} & \omega^{2s} & \cdots &\omega^{(2^t-2)s}\\
1 & \omega^{-s} & \omega^{-2s} & \cdots &\omega^{-(2^t-2)s}
\end{array}\right)$, where $0 \leq s \leq 2^{t-1}-1$. It can be checked that $\CC_0 = \{0\}$. Note that $\CC_s$ is an LCD, as $\det(G_sG_s^\top)=1$. Now, by Construction~\ref{con-1}, we define $\tilde{\CC_s}$ with generator matrix
$\tilde{G_s} = \left(\begin{array}{cccc}
\alpha & \textbf{P} \\
0 & G 
\end{array}\right)$ such that $\det\left(\tilde{G_s}\tilde{G_s}^\top\right)\neq \det\left(G_sG_s^\top\right)$,  where $\textbf{P}=(\beta_1~\beta_2~\cdots\beta_n)$ for $\beta_i\in\FF_q$ and $\alpha\in\FF_q^*$. For suitable choice of $\alpha$ and $\textbf{P}$, we obtain a class of codes $\tilde{\CC_s}:=[2^t,2s+2]$ such that $\tilde{\CC_s}$ is equivalent to a linear code  with one-dimensional hull.
\end{example}

\section{Some Constructions  of $\ell$-Dimensional Hull Codes}\label{sec:11dhc}

This section presents several methods for constructing a linear code's $\ell$-dimensional hull. For two linear codes $\CC_1:=[n,k_1]$ and $\CC_2:=[n,k_2]$ over the field $\FF_q$, we define their sum as $\CC_1+\CC_2 = \{c_1+c_2 \mid c_1\in\CC_1 \text{ and } c_2\in\CC_2\}$. It is straightforward to observe that $\CC_1+\CC_2$ forms a linear code of size $[n,k_1+k_2 - \dim(\CC_1\cap\CC_2)]$ over $\FF_q$.
\begin{thm}
Let $\CC_i:=[n,k_i]$ be two linear codes over $\FF_q$ with $\dim(\hull(\CC_i))=\ell_i$ for $i = 1, 2$. 
Let $\hull(\CC_1)\subseteq \CC_2^\perp$ and $\hull(\CC_2)\subseteq \CC_1^\perp$. If $\dim(\hull(\CC_1)\cap \hull(\CC_2))=\ell$, then $\dim(\hull(\CC_1+\CC_2))=\ell_1+\ell_2-\ell$. 
\end{thm}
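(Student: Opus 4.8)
The plan is to prove the stronger set identity $\hull(\CC_1+\CC_2)=\hull(\CC_1)+\hull(\CC_2)$ and then read off the dimension from the subspace formula recalled just before the statement, $\dim(\hull(\CC_1)+\hull(\CC_2))=\ell_1+\ell_2-\dim(\hull(\CC_1)\cap\hull(\CC_2))=\ell_1+\ell_2-\ell$. The one structural fact I would use throughout is $(\CC_1+\CC_2)^\perp=\CC_1^\perp\cap\CC_2^\perp$, so that $\hull(\CC_1+\CC_2)=(\CC_1+\CC_2)\cap\CC_1^\perp\cap\CC_2^\perp$. For brevity write $H_i=\hull(\CC_i)$, $D=\CC_1+\CC_2$, and $D^\perp=\CC_1^\perp\cap\CC_2^\perp$.

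First I would dispose of the easy inclusion $H_1+H_2\subseteq\hull(D)$. Take $h_1\in H_1$: then $h_1\in\CC_1\subseteq D$ and $h_1\in\CC_1^\perp$ by definition of the hull, while the hypothesis $H_1\subseteq\CC_2^\perp$ gives $h_1\in\CC_2^\perp$; hence $h_1\in D\cap\CC_1^\perp\cap\CC_2^\perp=\hull(D)$. The symmetric hypothesis $H_2\subseteq\CC_1^\perp$ gives $H_2\subseteq\hull(D)$, and since $\hull(D)$ is a subspace we obtain $H_1+H_2\subseteq\hull(D)$, and in particular $\dim\hull(D)\geq\ell_1+\ell_2-\ell$.

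The substantive part is the reverse inclusion, and this is where I expect the real difficulty. The useful first observation is that the two hypotheses are exactly what is needed to identify the ``slices'': $\CC_1\cap D^\perp=\CC_1\cap\CC_1^\perp\cap\CC_2^\perp=H_1\cap\CC_2^\perp=H_1$ (using $H_1\subseteq\CC_2^\perp$), and likewise $\CC_2\cap D^\perp=H_2$. Thus $\hull(D)\subseteq H_1+H_2$ is precisely the distributive (modular-law) identity $(\CC_1+\CC_2)\cap D^\perp=(\CC_1\cap D^\perp)+(\CC_2\cap D^\perp)$. To attempt it I would take $x\in\hull(D)$, write $x=c_1+c_2$ with $c_i\in\CC_i$, and try to repair the decomposition: rewriting the hypothesis as $\CC_2\subseteq\CC_1+\CC_1^\perp$ (indeed $H_1\subseteq\CC_2^\perp\iff\CC_2\subseteq(\CC_1\cap\CC_1^\perp)^\perp=\CC_1+\CC_1^\perp$), split $c_2=a+b$ with $a\in\CC_1$ and $b\in\CC_1^\perp$. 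Then $c_1+a\in\CC_1\cap\CC_1^\perp=H_1$, since $x\in\CC_1^\perp$ forces $c_1+a=x-b\in\CC_1^\perp$, and $b=x-(c_1+a)\in\CC_2^\perp$ because $H_1\subseteq\CC_2^\perp$; the goal is to conclude $b\in H_2$, so that $x=(c_1+a)+b\in H_1+H_2$.

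I want to flag that closing this last step is the main obstacle: the repaired remainder $b$ lands in $D^\perp$ but its membership $b\in\CC_2$ (needed for $b\in H_2$) is not automatic, since the modular identity $(\CC_1+\CC_2)\cap D^\perp=(\CC_1\cap D^\perp)+(\CC_2\cap D^\perp)$ need not hold for an arbitrary triple $\CC_1,\CC_2,D^\perp$. Consequently I would concentrate the work here, either by (i) invoking an additional transversality hypothesis that forces distributivity of the triple $(\CC_1,\CC_2,D^\perp)$ — for instance a control on $\CC_1\cap\CC_2$ relative to $H_1+H_2$ — or (ii) bypassing the set identity altogether and computing $\dim\big((\CC_1+\CC_2)\cap D^\perp\big)$ directly, e.g.\ via Proposition~\ref{p-11} applied to $\CC_1+\CC_2$, so as to match the lower bound $\ell_1+\ell_2-\ell$ already obtained. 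Once the reverse inclusion (equivalently, the matching dimension count) is secured, the theorem follows immediately by combining the two inclusions with the subspace dimension formula.
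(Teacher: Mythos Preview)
Your plan coincides with the paper's: both aim at the set identity $\hull(\CC_1+\CC_2)=\hull(\CC_1)+\hull(\CC_2)$ via $(\CC_1+\CC_2)^\perp=\CC_1^\perp\cap\CC_2^\perp$, and the decisive step in both is the distributivity
\[
(\CC_1+\CC_2)\cap(\CC_1^\perp\cap\CC_2^\perp)\;=\;\bigl(\CC_1\cap(\CC_1^\perp\cap\CC_2^\perp)\bigr)+\bigl(\CC_2\cap(\CC_1^\perp\cap\CC_2^\perp)\bigr).
\]
The paper simply \emph{asserts} this equality in the third line of its display and proceeds; you, by contrast, isolate it as ``the main obstacle'' and observe that the modular identity need not hold for an arbitrary triple of subspaces. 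Your diagnosis is correct, and your repaired-decomposition attempt (splitting $c_2=a+b$ with $a\in\CC_1$, $b\in\CC_1^\perp$) pinpoints precisely where the argument breaks: one gets $c_1+a\in H_1$ and $b\in D^\perp$, but $b\in\CC_2$ is not forced.

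In fact the distributive step---and hence the theorem as stated---is false. Over $\FF_4$ (indeed over any $\FF_{2^t}$) take $\CC_1=\langle(1,1,1,0,0)\rangle$ and $\CC_2=\langle(0,0,1,1,1)\rangle$. Each generator has self-inner-product $1$, so both codes are LCD; thus $\ell_1=\ell_2=\ell=0$ and the hypotheses $\hull(\CC_i)\subseteq\CC_j^\perp$ hold vacuously. But $v=(1,1,0,1,1)\in\CC_1+\CC_2$ satisfies $\langle v,(1,1,1,0,0)\rangle=\langle v,(0,0,1,1,1)\rangle=0$, so $v\in\hull(\CC_1+\CC_2)$ and $\dim\hull(\CC_1+\CC_2)=1\neq 0$. (Running your repair on this example gives $a=(1,1,1,0,0)$, $c_1+a=0\in H_1$, and $b=v\notin\CC_2$, exactly the failure you anticipated.) Your ``easy inclusion'' $H_1+H_2\subseteq\hull(\CC_1+\CC_2)$ is valid and yields the correct inequality $\dim\hull(\CC_1+\CC_2)\ge\ell_1+\ell_2-\ell$; the claimed equality, however, genuinely requires additional hypotheses, as you suspected.
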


\begin{proof}
Denote $\CC := \CC_1+\CC_2$. It can be checked that $(\CC_1+\CC_1)^\perp = \CC_1^\perp\cap\CC_2^\perp$
Then 
\begin{align*}
  \CC\cap\CC^{\perp}&=(\CC_1+\CC_2)\cap (\CC_1+\CC_1)^\perp &\\
                    &=(\CC_1+\CC_2)\cap (\CC_1^\perp\cap\CC_2^\perp)& \\
                    &=\left(\CC_1\cap (\CC_1^\perp\cap\CC_2^\perp)\right)+ \left(\CC_2\cap (\CC_1^\perp\cap\CC_2^\perp)\right) & \\
                    &=\left((\CC_1\cap\CC_1^\perp)\cap \CC_2^\perp\right) + \left((\CC_2\cap\CC_2^\perp)\cap \CC_1^\perp\right) &\\
                    &=\left(\CC_1\cap\CC_1^\perp\right) + \left(\CC_2\cap\CC_2^\perp\right)  ~~~~~~~~(\text{as }\hull(\CC_1)\subseteq \CC_2^\perp \text{ and } \hull(\CC_2)\subseteq \CC_1^\perp )&\\
\implies \dim\left(\CC\cap\CC^{\perp}\right)
       &=\dim\left(\CC_1\cap\CC_1^\perp\right) + \dim\left(\CC_2^\perp\cap\CC_2^\perp\right)-\dim\left(\left(\CC_1\cap\CC_1^\perp\right)\cap\left(\CC_2^\perp\cap\CC_2^\perp\right)\right)& \\
       &=\dim(\hull(\CC_1))+\dim(\hull(\CC_2))-\dim(\hull(\CC_1)\cap\hull(\CC_2))=\ell_1+\ell_2-\ell.&
\end{align*}
\end{proof}
\begin{cor}\label{cor-qq}
Let $\CC_i:=[n,k_i]$ be a linear codes over $\FF_q$ for $i\in\{1,2\}$ and $\CC=\CC_1+\CC_2$. Then, the following assertions hold.
\begin{enumerate}
\item[a)] If $\CC_1$ and $\CC_2$ are LCD codes, then $\CC$ is an LCD code.
\item[b)] If $\dim(\hull(\CC_1)) = \dim(\hull(\CC_2)) = 1$, then $\dim(\hull(\CC))\in \{0, 1, 2\}$.\\
Furthermore, if $\hull(\CC_1)\subseteq \CC_2^\perp$ and $\hull(\CC_2)\subseteq \CC_1^\perp$ then $\dim(\hull(\CC))\in \{1, 2\}$.
\item[c)] If $\dim(\hull(\CC_i))=1$ and $\dim(\hull(\CC_j))=0$ for $i\neq j$, then $\dim(\hull(\CC))\in \{0, 1\}$.\\
Further, if $\hull(\CC_i)\subseteq \CC_j^\perp$ then $\dim(\hull(\CC)) = 1$.
\item[d)] If $\dim(\hull(\CC_i))=\ell$ and $\dim(\hull(\CC_j))=0$ for $i\neq j$, then $\dim(\hull(\CC))\leq \ell$.
\end{enumerate}
\end{cor}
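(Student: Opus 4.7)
The plan is to derive each part by specializing the preceding theorem, which furnishes the identity
\[
\dim(\hull(\CC_1+\CC_2)) = \ell_1+\ell_2-\ell, \quad \ell=\dim(\hull(\CC_1)\cap \hull(\CC_2)),
\]
under the hypotheses $\hull(\CC_1)\subseteq \CC_2^\perp$ and $\hull(\CC_2)\subseteq \CC_1^\perp$. A basic observation that underlies every part is the inclusion $\hull(\CC_1+\CC_2)\cap \CC_i \subseteq \CC_i \cap \CC_i^\perp = \hull(\CC_i)$ for $i=1,2$, following from $(\CC_1+\CC_2)^\perp = \CC_1^\perp \cap \CC_2^\perp \subseteq \CC_i^\perp$.

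Part (a) is the instance $\ell_1=\ell_2=0$ of the theorem, in which the containment hypotheses hold trivially since $\hull(\CC_i)=\{0\}$, and the formula yields $\dim(\hull(\CC))=0$. The ``furthermore'' clauses in (b) and (c) are immediate applications of the theorem as well: in (b), $\ell_1=\ell_2=1$ together with $\ell\in\{0,1\}$ give $\dim(\hull(\CC))=2-\ell\in\{1,2\}$; in (c), $\{\ell_i,\ell_j\}=\{0,1\}$ with $\ell=0$ gives $\dim(\hull(\CC))=1$.

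For the main parts of (b), (c), and (d), where the containment hypotheses are dropped, I would perform a dimension count using the short exact sequence
\[
0 \to \hull(\CC)\cap \CC_i \to \hull(\CC) \to \hull(\CC)/(\hull(\CC)\cap \CC_i) \to 0,
\]
whose kernel has dimension at most $\dim(\hull(\CC_i))$ by the basic observation, and whose quotient embeds into $(\CC_1+\CC_2)/\CC_i \cong \CC_j/(\CC_1\cap \CC_2)$. Part (c) is the special case of (d) with $\ell=1$; (d) itself is handled by choosing $i=j$ (the LCD side), so that $\hull(\CC)\cap \CC_j=\{0\}$ makes the projection $\hull(\CC)\to \CC_i/(\CC_1\cap \CC_2)$ injective, and then using the non-degeneracy of the inner product on $\CC_j$ to produce a linear map $\psi:\CC_i\to \CC_j$ with $c_j=\psi(c_i)$ for every $h=c_i+c_j\in \hull(\CC)$. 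This confines the image to the class of $\hull(\CC_i)$ in $\CC_i/(\CC_1\cap \CC_2)$ and gives $\dim(\hull(\CC))\leq \ell$. For the main claim of (b), a parallel argument where both sides only have one-dimensional hulls determines $c_j$ modulo $\hull(\CC_j)$, inflating the cokernel bound by one and yielding $\dim(\hull(\CC))\in\{0,1,2\}$.

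The main obstacle is the cokernel estimate: bounding the image of $\hull(\CC)$ inside $\CC_j/(\CC_1\cap \CC_2)$ by $\dim(\hull(\CC_j))$ (plus a correction in (b)) without the theorem's containment hypothesis. The naive inclusion $\CC_j\cap(\CC_j^\perp+\CC_i)\subseteq \hull(\CC_j)+(\CC_1\cap \CC_2)$ fails in general because the modular law does not apply in this configuration, so the argument must combine the simultaneous constraints $h\perp \CC_1$ and $h\perp \CC_2$ with whichever non-degeneracy (LCD in (d)) or low-dimensional hull structure (in (b)) is available to rigidify the decomposition $h=c_i+c_j$ and close the bound.
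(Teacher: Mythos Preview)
The paper offers no separate proof of this corollary; it is stated as an immediate consequence of the preceding theorem. Your treatment of part (a) and of the ``furthermore/further'' clauses in (b) and (c) matches that intent exactly: those are direct specializations once one notes that the containment hypotheses hold trivially.

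For the unconditional parts of (b), (c), and (d), you correctly observe that the theorem no longer applies and attempt an independent argument. The genuine gap is the confinement step that you yourself flag as the ``main obstacle'': having produced $\psi:\CC_i\to\CC_j$ from non-degeneracy on the LCD side $\CC_j$, the relation $c_j=\psi(c_i)$ for $h=c_i+c_j\in\hull(\CC)$ encodes only the constraint $h\perp\CC_j$. To place the class of $c_i$ inside the image of $\hull(\CC_i)$ you would need $c_i\in\CC_i^\perp$, equivalently $c_j\in\CC_i^\perp$, and nothing in your construction forces this. The gap cannot be repaired, because the assertion is false. Over $\FF_4$ with primitive element $\omega$, take the one-dimensional LCD codes $\CC_1=\langle(1,0,0,\omega)\rangle$ and $\CC_2=\langle(0,1,0,\omega^2)\rangle$; then
\[
GG^\top=\left(\begin{array}{cc}\omega & 1\\ 1 & \omega^2\end{array}\right),\qquad \det(GG^\top)=\omega^3-1=0,
\]
so $\CC_1+\CC_2$ is not LCD, contradicting (a) and hence (d) with $\ell=0$. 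The same example refutes the preceding theorem itself: its proof uses the invalid distributive step $(\CC_1+\CC_2)\cap D=(\CC_1\cap D)+(\CC_2\cap D)$ with $D=\CC_1^\perp\cap\CC_2^\perp$, which is precisely the modular-law failure you identified in your final paragraph.
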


Next, we observe the following result.
\begin{thm}\label{th-as}
Let $\CC_i:=[n,k_i]$ be a linear code over $\FF_q$ for $i\in\{1,2\}$, and let $\CC=\CC_1+\CC_2$. If $\dim(\hull(\CC_1))=\ell$ and $\CC_2\subseteq \CC_1+\CC_1^\perp$, then $\dim(\hull(\CC))\geq \ell$. Moreover, if $\dim(\hull(\CC_2))=0$ (i.e., $\CC_2$ is LCD), then $\dim(\hull(\CC))= \ell$.
\end{thm}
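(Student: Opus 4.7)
The plan is to treat the two assertions separately: the inequality is established by a direct set-theoretic containment obtained after dualizing the hypothesis, and the equality then follows by invoking the preceding theorem in this section that computes $\dim(\hull(\CC_1+\CC_2))$ under cross-orthogonality conditions on the hulls.

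For the inequality $\dim(\hull(\CC))\geq \ell$, I would first reformulate the hypothesis by taking duals. Since $(\CC_1+\CC_1^\perp)^\perp=\CC_1^\perp\cap \CC_1=\hull(\CC_1)$, the inclusion $\CC_2\subseteq \CC_1+\CC_1^\perp$ is equivalent to $\hull(\CC_1)\subseteq \CC_2^\perp$. Combined with the trivial inclusion $\hull(\CC_1)\subseteq \CC_1^\perp$, this gives $\hull(\CC_1)\subseteq \CC_1^\perp\cap \CC_2^\perp=(\CC_1+\CC_2)^\perp=\CC^\perp$. Together with $\hull(\CC_1)\subseteq \CC_1\subseteq \CC$, one obtains the set-theoretic containment $\hull(\CC_1)\subseteq \CC\cap \CC^\perp=\hull(\CC)$, whence $\dim(\hull(\CC))\geq \dim(\hull(\CC_1))=\ell$.

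For the equality when $\CC_2$ is LCD, I would apply the preceding theorem of this section, which asserts that whenever $\hull(\CC_1)\subseteq \CC_2^\perp$ and $\hull(\CC_2)\subseteq \CC_1^\perp$, one has
$$\dim(\hull(\CC_1+\CC_2))=\dim(\hull(\CC_1))+\dim(\hull(\CC_2))-\dim(\hull(\CC_1)\cap \hull(\CC_2)).$$
Under the present hypotheses, $\hull(\CC_2)=\{0\}$, so both $\dim(\hull(\CC_2))=0$ and the containment $\hull(\CC_2)\subseteq \CC_1^\perp$ hold automatically; the companion condition $\hull(\CC_1)\subseteq \CC_2^\perp$ has already been established above. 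Finally $\hull(\CC_1)\cap \hull(\CC_2)=\hull(\CC_1)\cap \{0\}=\{0\}$ has dimension $0$, so the formula collapses to $\dim(\hull(\CC))=\ell+0-0=\ell$.

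The only genuinely non-routine step is the opening dualization $\CC_2\subseteq \CC_1+\CC_1^\perp\iff \hull(\CC_1)\subseteq \CC_2^\perp$; once this equivalence is recorded, both assertions reduce to bookkeeping applications of what has already been proved, and no further obstacle is anticipated.
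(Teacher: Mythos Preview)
Your proposal is correct and follows essentially the same route as the paper: both arguments establish the inequality by showing the containment $\hull(\CC_1)\subseteq\hull(\CC)$ (the paper verifies $G_1x_i^\top=0$ and $G_2x_i^\top=0$ for a basis $\{x_i\}$ of $\hull(\CC_1)$, while you package this more cleanly via the dualization $(\CC_1+\CC_1^\perp)^\perp=\hull(\CC_1)$), and both obtain the equality from the immediately preceding result in this section (you invoke the theorem itself, the paper cites its Corollary~\ref{cor-qq}(d)).
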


\begin{proof}
Let $G_i$ be a generator matrix of $\CC_i$ for $i=1,2$. Also let $\{x_1,x_2,x_3,\ldots,x_\ell\}$ is a basis of $\hull(\CC_1)$. Then $x_i \in \CC$ for $1 \leq i \leq \ell$.
It is easy to observe that $G=\left(\begin{array}{cc}
G_1 \\ G_2
\end{array}\right)$ is a generating set of $\CC$.

Since $x_i\in\hull(\CC_1)$, $c x_i^\top = 0$ and $d x_i^\top = 0$ for all $c\in \CC_1$ and $d \in \CC^\perp_1$. It follows that $G_1x_i^\top=0$.
By hypothesis, we have $\CC_2\subseteq \CC_1+\CC_1^\perp$; it implies that the basis of $\CC_2$ is an element of a linear combination of the basis set of $\CC_1+\CC_1^\perp$, i.e., $G_2x_i^\top=0$. Hence, $Gx_i^\top = 0$ and that implies $x_i \in \CC^\perp$.
Therefore, $x_i \in \CC \cup \CC^\perp$ for $1 \leq i \leq \ell$ and hence $\dim(\hull(\CC))\geq \ell$.
The final part follows from Corollary~\ref{cor-qq}.
\end{proof}
The following result presents a canonical construction of $\ell+1$-dimensional hull from $\ell$-dimensional hull. This result was obtained in \cite{LSK23} for some special codes over $\FF_2$.
\begin{prop}\label{p-qwq}
Let $\CC:=[n,k]$ be a linear code over $\FF_q$ with $\dim(\hull(\CC))=\ell$ and a generator matrix $G$. If $d$ is a non-zero codeword of $\CC^\perp$ such that $d d^\top = 1$, then the extended code $\CC_{ex}:=[n+1,k+1]$ of $\CC$ with following generator matrix 
$G_{ex}=\left(\begin{array}{cc}
 1 & d \\
 0 & G
\end{array}\right)$ generates $\ell+1$-dimensional hull.
\end{prop}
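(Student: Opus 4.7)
The plan is to compute $G_{ex}G_{ex}^\top$ and then apply Proposition~\ref{pr-2.1} twice to convert statements about its rank into statements about hull dimensions.

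First, I would block-multiply to obtain
\[
G_{ex}G_{ex}^\top = \begin{pmatrix} 1+dd^\top & dG^\top \\ Gd^\top & GG^\top \end{pmatrix}.
\]
The hypothesis $d\in\CC^\perp$ means $d$ is orthogonal to every row of $G$, so the off-diagonal blocks $Gd^\top$ and $dG^\top$ both vanish. In characteristic $2$ (implicit from the paper's focus on $\FF_{2^t}$), the hypothesis $dd^\top=1$ forces the top-left entry $1+dd^\top = 1+1 = 0$. This reduces the matrix to
\[
G_{ex}G_{ex}^\top = \begin{pmatrix} 0 & 0 \\ 0 & GG^\top \end{pmatrix},
\]
so $\text{rank}(G_{ex}G_{ex}^\top) = \text{rank}(GG^\top)$.

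Next, I would verify that $G_{ex}$ has linearly independent rows: the leading $1$ in the top row cannot be cancelled by any linear combination of the rows $(0 \mid G)$, so $\CC_{ex}$ is genuinely an $[n+1,k+1]$ code. Applying Proposition~\ref{pr-2.1} to $\CC$ gives $\text{rank}(GG^\top) = k-\ell$, while applying it to $\CC_{ex}$ gives $\text{rank}(G_{ex}G_{ex}^\top) = (k+1) - \dim(\hull(\CC_{ex}))$. Equating these two expressions yields $\dim(\hull(\CC_{ex})) = \ell+1$.

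The only delicate point is the characteristic-$2$ assumption. It is precisely the vanishing of $1+dd^\top$ that keeps the rank of the Gram matrix fixed while the code dimension grows by one, producing the desired increase in hull dimension. In odd characteristic the same computation would instead yield a hull of dimension $\ell$, so this hypothesis is essential rather than cosmetic.
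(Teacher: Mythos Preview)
Your proof is correct and follows essentially the same approach as the paper: compute the block form of $G_{ex}G_{ex}^\top$, use $d\in\CC^\perp$ and $1+dd^\top=0$ to reduce it to $\left(\begin{smallmatrix}0&0\\0&GG^\top\end{smallmatrix}\right)$, then apply Proposition~\ref{pr-2.1} twice. You are in fact slightly more careful than the paper in making the characteristic-$2$ assumption explicit and in verifying that $G_{ex}$ has full row rank.
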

\begin{proof}
It is easy to see that $\CC_{ex}$ is an $[n+1,k+1]$-linear code.
Next,
$G_{ex}G_{ex}^\top=\left(\begin{array}{cc}
 1 + d d^\top & d G^\top \\
 0 & G G^\top
\end{array}\right)=\left(\begin{array}{cc}
 0 & 0 \\
 0 & G G^\top
\end{array}\right)$.
Hence, $rank(G_{ex}G_{ex}^\top)=rank(GG^\top)$. 
As $\dim(\hull(\CC))=\ell$, by Proposition~\ref{pr-2.1}, we obtain that $rank(GG^\top) = k-\dim(\hull(\CC)) = k-\ell$.
Therefore from Proposition~\ref{pr-2.1} $\dim(\hull(\CC_{ex})) = k+1-rank(G_{ex}G_{ex}^\top) = k+1-rank(GG^\top) = k+1-(k-\ell) = \ell+1$, which completes the proof.
\end{proof}
Now, we present some numerical examples of construction.

\begin{example} Let \(\CC_1\) and \(\CC_2\) be two linear codes over the field \(\FF_4 = \{0, 1, \omega, \omega^2\}\) with the following generator matrices:
\[
G_1 = \begin{pmatrix} 
1 & 0 & \omega & \omega^2 \\ 
0 & 1 & \omega^2 & 1 
\end{pmatrix}
\]

and 

\[
G_2 = \begin{pmatrix} 
1 & \omega^2 & 1 & 1 
\end{pmatrix}.
\]

It is straightforward to verify that both \(\CC_1\) and \(\CC_2\) are linear codes that are also LCD (Linear Complementary Dual) codes. We can confirm using the software Magma that the sum \(\CC_1 + \CC_2\) results in a \([4,3,2]\) LCD code, which aligns with the result stated in Corollary~\ref{cor-qq}.
\end{example}

\begin{example}

Let \(\CC_1\) and \(\CC_2\) be two linear codes over the field \(\FF_4 = \{0, 1, \omega, \omega^2\}\) with the following generator matrices:

\[
G_1 = \begin{pmatrix}
1 & 0 & 0 & \omega & \omega^2 \\
0 & 1 & 0 & \omega & 1 
\end{pmatrix}
\]

and 

\[
G_2 = \begin{pmatrix}
0 & 0 & 1 & \omega^2 & \omega 
\end{pmatrix}.
\]

It is straightforward to verify that \(\CC_1\) is a linear code with the property that it is an LCD code, while \(\CC_2\) is a one-dimensional hull code. Using the software Magma, we can confirm that the sum \(\CC_1 + \CC_2\) forms a \([5, 3, 2]\) one-dimensional hull code, which aligns with Theorem~\ref{th-as}.
\end{example}

\begin{example}
Let $\CC:=[5,2,4]$ be a linear code over $\FF_8 :=\{0,1,\omega,\omega^2,\omega^3,\omega^4,\omega^5,\omega^6~|~1+\omega+\omega^3=0\}$ with generator matrices 
$G=\left(\begin{array}{ccccc}
1 & 0 & \omega & \omega^4 & \omega^6\\
0 & 1 & \omega^4 & \omega & \omega^5 \\
 \end{array}\right)$. It is straightforward to see that $\CC$ is an LCD and $(0,~0,~1,~\omega^5,~\omega^5)\in\CC^\perp$. An extended code $\CC_{ex}$ of $\CC$ has a generator matrix $G_{ex}=\left(\begin{array}{cccccc}
1 & 0 & 0 & 1     & \omega^5 & \omega^5 \\
0 & 1 & 0 & \omega & \omega^4 & \omega^6\\
0 & 0 & 1 & \omega^4 & \omega & \omega^5 \\
 \end{array}\right)$. Using Magma, we can verify that $\CC_{ex}$ is a $[6,3,3]$ one-dimensional hull, which coincides with Proposition~\ref{p-qwq}.
\end{example}

\begin{example}
Let $\CC:=[6,4,3]$ be a linear code over $\FF_8 :=\{0,1,\omega,\omega^2,\omega^3,\omega^4,\omega^5,\omega^6~|~1+\omega+\omega^3=0\}$ with generator matrices 
$G=\left(\begin{array}{cccccc}
1 & 0 & 0 & 0 & \omega^3 & \omega^2\\
0 & 1 & 0 & 0 & 1 & \omega \\
0 & 0 & 1 & 0 & \omega^6 & \omega^2 \\
0 & 0 & 0 & 1 & 1 & \omega^2 \\
 \end{array}\right)$. It is clear that $\CC$ is a one-dimensional hull and $(1,~0,~\omega^5,~\omega^3,~1,~\omega^6)\in\CC^\perp$. An extended code $\CC_{ex}$ of $\CC$ with a generator matrix $G=\left(\begin{array}{ccccccc}
1 & 1 & 0 & \omega^5 & \omega^3 & 1 & \omega^6\\
0 & 1 & 0 & 0 & 0 & \omega^3 & \omega^2\\
0 & 0 & 1 & 0 & 0 & 1 & \omega \\
0 & 0 & 0 & 1 & 0 & \omega^6 & \omega^2 \\
0 & 0 & 0 & 0 & 1 & 1 & \omega^2 \\
 \end{array}\right)$. Using Magma, we can verify that $\CC_{ex}$ is a $[7,5,3]$ two-dimensional hull, which coincides with Proposition~\ref{p-qwq}. Furthermore, $\CC$ is an MDS code.
\end{example}
\section{Conclusions}\label{sec: conclusion}

This paper focused on and advanced the construction of one-dimensional hull linear codes over binary finite fields. It also utilises zero-dimensional hull linear codes, which are known as LCD codes. 

Firstly, we highlight significant recent results (2023) by Chen, who demonstrated that any LCD code over an extended binary field \(\FF_q\) (with \(q > 3\)) that has a minimum distance of at least 2 is equivalent to the one-dimensional hull of a linear code. 

Finally, the techniques employed in this study may lead to the development of other low-dimensional hull linear codes over finite fields, suggesting interesting future research directions. Our work contributes to constructing hulls with an \(\ell + 1\)-dimensional structure from hulls with an \(\ell\)-dimensional structure of linear codes.

\section*{Acknowledgments}
The authors thank the Associate Editor and the anonymous referees for their helpful comments and valuable suggestions, which have greatly improved the quality of the paper.

\end{document}